\newtheorem{theorem}{Theorem}
\DeclareMathOperator*{\argmin}{arg\,min}
\newcommand\iid{ \; \smash{\stackrel{\mathrm{iid}}{\sim}} \; }
\DeclareAcronym{iid}{short = IID, long  = independent and identically distributed}
\DeclareAcronym{kl}{short = KL, long  = Kullback--Leibler}
\DeclareAcronym{mmd}{short = MMD, long  = maximum mean discrepancy}
\DeclareAcronym{rkhs}{short = RKHS, long  = reproducing kernel Hilbert space}
\DeclareAcronym{mcmc}{short = MCMC, long  = Markov chain Monte Carlo}
\DeclareAcronym{ode}{short = ODE, long  = ordinary differential equation}
\DeclareAcronym{sde}{short = SDE, long  = stochastic differential equation}
\DeclareAcronym{lvm}{short = LVM, long  = Lotka--Volterra model}
\DeclareAcronym{pcuq}{short = PCUQ, long  = Prediction-Centric Uncertainty Quantification}
\DeclareAcronym{mala}{short = MALA, long  = Metropolis-adjusted Langevin algorithm}
\DeclareAcronym{erk}{short = ERK, long  = extracellular signal regulated kinase}
\DeclareAcronym{rkip}{short = RKIP, long  = RAF kinase inhibitor protein}
\DeclareAcronym{mek}{short = MEK, long  = mitogen-activated protein kinase kinase}
\DeclareAcronym{meki}{short = MEKi, long  = MEK inhibitor}
\begin{document}

\twocolumn[

\aistatstitle{Prediction-Centric Uncertainty Quantification via MMD}

\aistatsauthor{ Zheyang Shen \And Jeremias Knoblauch \And  Sam Power \And Chris. J. Oates }

\aistatsaddress{ Newcastle University \And  University College London \And University of Bristol \And Newcastle University } ]

\begin{abstract}
    Deterministic mathematical models, such as those specified via differential equations, are a powerful tool to communicate scientific insight.
    However, such models are necessarily simplified descriptions of the real world.
    Generalised Bayesian methodologies have been proposed for inference with misspecified models, but these are typically associated with vanishing parameter uncertainty as more data are observed.
    In the context of a misspecified deterministic mathematical model, this has the undesirable consequence that posterior predictions become deterministic and certain, while being incorrect.
   Taking this observation as a starting point, we propose \emph{Prediction-Centric Uncertainty Quantification}, where a mixture distribution based on the deterministic model confers improved uncertainty quantification in the predictive context.
   Computation of the mixing distribution is cast as a (regularised) gradient flow of the maximum mean discrepancy (MMD), enabling consistent numerical approximations to be obtained.
   Results are reported on both a toy model from population ecology and a real model of protein signalling in cell biology.
\end{abstract}

\section{INTRODUCTION}

Deterministic modelling has long been a staple in scientific and engineering disciplines, providing a simplified yet powerful framework in which complex systems can be understood. 
Unfortunately, such models are typically misspecified and, by their very nature, do not attempt to capture more complex aspects of the real world.
This can lead to misleading inferences and predictions, and is a particular problem across disciplines as diverse as climatology, epidemiology, and cell biology, where deterministic and simulation-based models are routinely used \citep[see][]{hermans2022crisis}.

A na\"{i}ve and often reflexive remedy proposed for addressing such robustness concerns is (some form of) Bayesian inference \citep[e.g.,][]{beaumont2010approximate, csillery2010approximate}.
Here, the hope is that the influence of a well-specified prior, coupled with an appropriate measurement error model, can counteract the failure to model the often complex structure that remains in the residual.
Unfortunately, this approach is flawed: As more data are observed, the influence of the prior is diminished and the Bayesian posterior concentrates around a single best-fitting parameter value \citep[see e.g.,][]{ghosal2000convergence, van2000asymptotic, ghosal2007convergence}. 
This loss of epistemic uncertainty is a severe limitation, as being pushed towards placing all belief on a singular parameter value leads to arbitrarily confident deterministic predictions being generated under a model that is fundamentally misspecified.

Despite considerable research into robust and/or generalised Bayesian inference, it remains the case that virtually all existing methodologies suffer the same problem:
While they provide robustness in the face of misspecified observations, noise, and outliers, their posteriors still collapse towards a single best-fitting parameter value once sufficient data are observed
\citep[see][]{chernozhukov2003mcmc, miller2021asymptotic,matsubara2022robust,matsubara2023generalized, frazier2024impact}.
To bridge the gap between powerful deterministic modelling and probabilistic uncertainty quantification, a change in perspective may be required.

This paper presents a new approach to probabilistic inference and prediction in the context of misspecified deterministic mathematical models, capable of providing non-vanishing epistemic uncertainty as more data are observed. 
This is achieved by inducing uncertainty in the parameter space through the implied predictive model, which is now interpreted as a mixture component within a mixture model. 
A variational criterion is proposed, based on the \ac{mmd}, which provides robustness to data outliers and enables computation via (regularised) gradient flow of the \ac{mmd}.
Practically, the result is an interacting particle system whose limiting distribution quantifies parameter uncertainty in a manner that is relevant to the predictive context.
As we show in a range of experiments, this new methodology is particularly appealing when one fits noisy data to a deterministic and possibly misspecified mathematical model.

\section{MOTIVATION}

Bayesian inference is often promoted as a silver bullet for epistemic uncertainty quantification. 
However, to produce well-calibrated uncertainty, Bayesian updating 
requires the likelihood function to be well-specified \citep{walker2013bayesian, bissiri2016general}.
This is rarely the case when working with a deterministic mathematical model, as properly acknowledging the complexities of real-world data almost always requires stochasticity to feature at some level in the model.
The situation is not easily resolved, as from a scientific modelling point of view it can be much more challenging to design an accurate stochastic model compared to using traditional established techniques to design an interpretable deterministic model.

A na\"{i}ve remedy would  be to replace parameter inference and prediction with a form of---possibly black box---density estimation \citep[e.g.][]{fong2023martingale}.
Unfortunately, this would not be a useful solution in many scientific enquiries, as the structure provided by even an approximate model can be essential in constraining predictions to be at least `in the right ballpark'.
Further, many mathematical models are explicitly \emph{causal} models; for example, a differential equation model of cell signalling may describe biological mechanisms governing the signalling dynamics.
A causal model enables prediction for the effect of interventions which change how the system operates, such as the use of molecular drug treatment to modulate cell signalling (an example is presented in \Cref{subsec: engineering}), which density estimation alone cannot.

A second angle of attack that may seem appealing at first glance is to appeal to generalised Bayesian methods \citep{knoblauch2022optimization}. 
These can produce posteriors that are robust to  outliers when the data-generating model is misspecified---a feature that is particularly pertinent to deterministic mathematical models.
Yet, these methods exhibit a limitation similar to Bayesian inference, in that parameter uncertainty vanishes as more data are observed. 
This is due to the fact that generalised Bayesian procedures quantify parameter uncertainty in relation to the fit of data to a general loss function---rather than the implied predictive distribution on observables.
Such behaviour might be desirable if the data-generating mechanism perfectly aligns with a specific parameter value of the posited model, but it significantly worsens the predictive capacity of deterministic models that are misspecified, as incorrect predictions can become arbitrarily confident when more and more data are collected. 
This could in principle have serious real-world implications; for example, in \Cref{subsec: engineering} we present a case study in which over-confidence in the efficacy of a drug treatment results from applying Bayesian and generalised Bayesian methods to a misspecified deterministic cell signalling model.

\section{PREDICTION-CENTRIC UNCERTAINTY}

This section describes our \emph{prediction-centric} perspective on inference and prediction with a (possibly misspecified) deterministic mathematical model.
As with Bayesian and generalised Bayesian procedures, the approach that we present next produces a distribution over parameter values as its output. 
However, unlike these existing approaches, the parameter uncertainty in our methodology will not be defined relative to the \textit{average data fit} of the model, but relative to its \textit{predictive fit} instead. 
Our inspiration is drawn from the optimisation-centric view on Bayesian inference, in line with many contemporary extensions of Bayesian reasoning \citep{knoblauch2022optimization}. 
To explain this perspective, we define the parameter space $\Theta$, and consider observations $y_i \in \mathcal{Y}$ for $i=1,2,\dots n$ with $y_{1:n} = (y_1, y_2, \dots y_n)$, eschewing all considerations of measurability to simplify presentation throughout. 
The unsuitability of a given parameter $\theta \in \Theta$ for the observations $y_{1:n} \in \mathcal{Y}^n$ is quantified by a \emph{loss function} $\mathsf{L}_n:\Theta \times \mathcal{Y}^n \to \mathbb{R}$.
Further, we denote by $\mathcal{P}(\Theta)$ the set of probability distributions on $\Theta$, by $Q_0 \in \mathcal{P}(\Theta)$ a prior probability distribution over $\Theta$, by $\operatorname{KL}(Q, P)$ the \ac{kl} divergence between two probability distributions $Q, P \in \mathcal{P}(\Theta)$, and by $\lambda_n > 0$ a scalar.
In \citet{knoblauch2022optimization}, it is argued that a principled recipe for obtaining (generalised) posteriors over  $\Theta$ is given by the (assuming uniqueness) minimiser 
\begin{IEEEeqnarray}{rCl}
    Q_n^{\dagger} & = &
    \argmin_{Q \in \mathcal{P}(\Theta)} \underbrace{\int \mathsf{L}_n(\theta, y_{1:n})\: \mathrm{d} Q(\theta)}_{\text{average data fit}} +  \underbrace{\lambda_n\operatorname{KL}(Q, Q_0)}_{\text{regularisation}}
    , \qquad
    \label{eq:ocgbi}
\end{IEEEeqnarray}
which recovers the standard Bayesian posterior distribution when $\mathsf{L}_n(\theta, y_{1:n}) = -\log p_{\theta}(y_{1:n})$ is the negative log-likelihood arising from a statistical model $P_{\theta} \in \mathcal{P}(\mathcal{Y})$, and $\lambda_n = 1$.
In the motivating context of deterministic mathematical models, the \textit{statistical} model $P_\theta$ might represent noise-corrupted observations of the deterministic model output.
For various other choices of $\mathsf{L}_n$ and $\lambda_n$, generalised Bayesian methods can produce robust posteriors suitable for dealing with certain forms of statistical model misspecification \citep[see][]{
hooker2014bayesian,
ghosh2016robust,
knoblauch2018doubly, 
schmongeneralized, 
cherief2020mmd,
dellaporta2022robust, husain2022adversarial,
altamirano2023robust,
altamirano2023robustGP,  
duranoutlier}.

For all reasonable choices of $\mathsf{L}_n$ and $\lambda_n$, the relative importance of the first term
in \eqref{eq:ocgbi} increases with $n$. As a result, $Q_n^{\dagger}$ increasingly allocates most of its probability mass around a single (assuming uniqueness) best-fitting  parameter $\theta^\dagger$ \citep{miller2021asymptotic}. 
This effect is particularly problematic if the loss is defined relative to a parameter $\theta$ that indexes a misspecified model $P_{\theta}$:  now, the implied posterior predictive $\int P_{\theta} \: \mathrm{d}Q_n^{\dagger}(\theta)$ will quickly collapse to the plug-in predictive $P_{\theta^\dagger}$, and will effectively predict from a singular element of the misspecified model class $\{ P_{\theta} : \theta \in \Theta\}$. 
As a result, posteriors $Q_n^{\dagger}$ computed via \eqref{eq:ocgbi} will exhibit poor predictive uncertainty.
For a formal proof and additional findings surrounding this phenomenon, we refer the reader to \citet{mclatchie2024predictive}.

To remedy the predictive drawbacks of generalised Bayesian methods, we develop an alternative approach that instead quantifies parameter uncertainty relative to \emph{predictive} capability, and replaces the role of the average data fit with a form of predictive fit. 
To achieve this, we first denote the \acf{mmd} between probability distributions $P,Q$ as $\operatorname{MMD}(P,Q)$ \citep[see e.g.][]{gretton2012kernel}, and denote the empirical measure of the dataset as $P_n = \frac{1}{n}\sum_{i=1}^n \delta_{y_i}$.
Then we compute the (assuming uniqueness) minimiser
\begin{align}
    Q_n  = \argmin_{Q \in \mathcal{P}(\Theta)} \; \frac{1}{2} \underbrace{\operatorname{MMD}^2( P_n , P_{Q} )}_{\text{predictive fit}}
    +
    \underbrace{\lambda_n \operatorname{KL}(Q, Q_0)}_{\text{regularisation}}
    \label{eq:prediction-centric-uq}
\end{align}
where 
\begin{align}
P_Q = \int P_{\theta} \: \mathrm{d}Q(\theta) \in \mathcal{P}(\mathcal{Y})  \label{eq: mix model}
\end{align}
denotes a mixture model, whose components are instances $P_\theta$ of the statistical model and whose mixing distribution is $Q$, and  $\lambda_n > 0$ a constant to be specified.
Throughout the remainder of this paper, procedures defined by minimising \eqref{eq:prediction-centric-uq} will be referred to as \ac{pcuq}.
Before discussing related work, we briefly comment on the main features of \ac{pcuq}:

\paragraph{Mixture Model}
The mixture model $P_Q$ in \eqref{eq: mix model} expands predictive potential beyond the original statistical model $P_\theta$, which may be useful if the original statistical model is misspecified.
On the other hand, $P_Q$ retains the original statistical model $P_\theta$ as the special case where $Q = \delta_\theta$, which in principle enables $P_\theta$ to be exploited when the model is well-specified.

\paragraph{Predictive Fit}
The use of \ac{mmd} in \eqref{eq:prediction-centric-uq}, as opposed to other statistical divergences, confers outlier-robustness to \ac{pcuq}, which may be valuable in the misspecified context (c.f. \Cref{subsec: robust}), and the use of the \emph{squared} \ac{mmd} carries computational advantages, enabling the use of powerful emerging sampling methods based on gradient flows \citep{wild2023rigorous} (c.f. \Cref{subsec: q_n}).

\paragraph{Regularisation}
Similarly to \eqref{eq:ocgbi}, the regularisation term involves a reference distribution $Q_0$ (i.e. the prior, in the standard Bayesian context).
The functional role of $Q_0$ in \ac{pcuq} is explored in \Cref{ap: theory}, where we derive the perhaps surprising result that $Q_0$ acts on $Q_n$ in essentially the same way that $Q_0$ acts on Gibbs measures like $Q_n^{\dagger}$, as a reference measure in a Radon--Nikodym derivative \citep{bissiri2016general, knoblauch2022optimization}(c.f. \Cref{lem: characterise Qn} in \Cref{ap: theory}).
That is, one can reason about `updating belief distributions' using \ac{pcuq}.

\section{RELATED WORK}

The ideas we pursue in this work are linked to, and in some ways improve upon, a collection of seemingly disparate previous contributions in both classical statistics and machine learning. 

First amongst these is \textit{nonparametric maximum likelihood}, which minimises $Q \mapsto -\frac{1}{n}\sum_{i=1}^n  \log p_Q(y_i)$ where $p_{Q}$ is a density for $P_Q$ \citep[see Chapter 5 of][]{lindsay1995mixture}. 
This objective approximates $\operatorname{KL}(P_\star, P_Q)$ when $y_{1:n}$ is a collection of $n$ independent samples from $P_\star \in \mathcal{P}(\mathcal{Y})$, and can therefore be interpreted as a version of \eqref{eq:prediction-centric-uq} where $\lambda_n = 0$ and $\operatorname{MMD}^2$ is replaced by $\operatorname{KL}$.
The lack of regularisation causes several issues, including computational difficulties and non-identifiability \citep[see e.g.][]{laird1978nonparametric}, as the minimising measure will generally be fully atomic, see \citet[][e.g. Theorem 21 in Chapter 5]{lindsay1995mixture} and \citet{jordan2015convex}.

One way of addressing these shortcomings is via regularisation.
While the link to nonparametric maximum likelihood is never explicit, regularisation like this has been enforced by adding a \ac{kl} divergence as in \eqref{eq:prediction-centric-uq}  and by constraining $Q$ to a parametric variational family $\mathcal{Q}$ \citep{jankowiak2020parametric, jankowiak2020deep, sheth2020pseudo}, resulting in objectives that are variants of 
\begin{align}
    \argmin_{Q \in \mathcal{Q}} 
    -\frac{1}{n} \sum_{i=1}^n \log \left(p_{Q}(y_i)^{\alpha}\right) + \lambda_n \operatorname{KL}(Q, Q_0).
    \label{eq:half-way-eq}
\end{align}
For example, the choice $\lambda_n = \frac{\alpha}{n}$ can be linked to approximation of the standard Bayesian posterior via $\alpha$-divergences \citet{li2017dropout, villacampa2020alpha}.
Similarly, this objective has been considered in the context of Gaussian processes and deep Gaussian processes with $\alpha = 1$ and various choices for $\lambda_n$ \citet{jankowiak2020deep,jankowiak2020parametric,sheth2020pseudo}.
More recently, \citet{crucinio2022solving} used a similar formulation to solve Fredholm integral equations.
Enriching this with a solid theoretical foundation, \citet{masegosa2020learning} and \citet{morningstar2022pacm} motivated targeting objectives like \eqref{eq:half-way-eq} via PAC-Bayesian bounds.
While they result in superior predictive performance relative to standard Bayesian and variational schemes, the applicability of objectives like \eqref{eq:half-way-eq} is limited; apart from  Gaussian-type likelihoods where the log integral $\log p_{Q}(y_i)$ has a closed form, computation can become impractical. 
In particular, approximating the log integral via samples from $Q$ can yield a highly biased approximation, and generally renders this methodology impractical.

This substantive drawback prompts us to seek inspiration from an ongoing line of research into generalised and post-Bayesian methods \eqref{eq:ocgbi} popularised by \citet{bissiri2016general} and \citet{knoblauch2022optimization}.
Specifically, we will go beyond loss functions that focus on goodness-of-fit, and instead assess predictive fit.
Further, we seek a predictive loss that is both computationally feasible and robust.
To achieve this, we follow  \citet{cherief2020mmd} and \citet{alquier2024universal}, and assess predictive fit in  \eqref{eq:prediction-centric-uq} via the (squared) \ac{mmd}.
Beyond tackling the computational challenges of the intractable log integral $\log p_{Q}(y_i)$, the MMD allows us to obtain inferences that are more robust to outliers and model misspecification \citep[see][]{briol2019statistical, cherief2022finite, alquier2023estimation}.

\section{METHODOLOGY}

Throughout \Cref{subsec: pre via mmd,subsec: robust,subsec: q_n}, we will simplify the presentation and assume that $y_i$ are sampled independently from some unknown population distribution $P_\star \in \mathcal{P}(\mathcal{Y})$.
The case of non-independent data, where data $y_i$ are allowed to depend on covariates $x_i$, is deferred to \Cref{subsec: non IID}.

\subsection{Predictive Fit via MMD}
\label{subsec: pre via mmd}

The predictive fit term in \eqref{eq:prediction-centric-uq} is based on the \ac{mmd}, which will simultaneously confer computational efficiency and robustness to our method.
To define it, we let $k : \mathcal{Y}  \times \mathcal{Y}  \rightarrow \mathbb{R}$ be a kernel\footnote{A function $k : \mathcal{Y}  \times \mathcal{Y}  \rightarrow \mathbb{R}$ is  a \emph{kernel} if $k(u,v) = k(v,u)$ for all $u,v \in \mathcal{Y} $, and $\sum_{i=1}^m \sum_{j=1}^m w_i w_j k(u_i,u_j) \geq 0$ for all $w_1,\dots,w_m \in \mathbb{R}$, $u_1,\dots,u_m \in \mathcal{Y} $, and $m \in \mathbb{N}$.} %
and denote by $\mathcal{H}(k)$ the associated \ac{rkhs}  \citep[see][for background]{berlinet2011reproducing}.
For a given kernel $k$ and a probability distribution $P \in \mathcal{P}(\mathcal{Y})$, we can now define the \textit{kernel mean embedding} as\footnote{Throughout, we will assume this embedding exists as a strong (Bochner) integral. This always holds for the most popular choices of kernels (such as Gaussian and Mat\'ern kernels).}
\begin{IEEEeqnarray}{rCl}
\mu_k(P) & := & \int k(\cdot , y) \; \mathrm{d}P(y) \in \mathcal{H}(k).
\nonumber
\end{IEEEeqnarray}
The divergence of a candidate $P \in \mathcal{P}(\mathcal{Y})$ from the data-generating distribution $P_\star$ can be quantified using \ac{mmd}; a pseudometric defined as the \ac{rkhs}-norm  between the pair of kernel mean embeddings, i.e.
\begin{IEEEeqnarray}{rCl}
    \operatorname{MMD}(P_\star, P) & = &
    \| \mu_k(P_\star) - \mu_k(P) \|_{\mathcal{H}(k)} \:.
    \nonumber
\end{IEEEeqnarray}
The \ac{mmd} is a proper metric if $k$ is a \emph{characteristic} kernel \citep{sriperumbudur2011universality}; our use of \ac{mmd} is justified by its interpretation as a statistical divergence induced by a \emph{proper scoring rule} \citep{dawid1986probability}.
The mixture model $P_Q$ in \eqref{eq:prediction-centric-uq} has kernel mean embedding
\begin{IEEEeqnarray}{rCl}
    \mu_k(P_Q) & = & \iint k(\cdot,y) \; \mathrm{d}P_\theta(y) \mathrm{d}Q(\theta) = \int \mu_k(P_\theta) \; \mathrm{d}Q(\theta), 
    \nonumber
\end{IEEEeqnarray}
so the \ac{mmd} between $P_\star$ and $P_Q$ can be written as
\begin{IEEEeqnarray}{rCl}
    \operatorname{MMD}^2(P_\star, P_Q)
    & = &
    \left\| \int \left\{\mu_k(P_\star) - \mu_k(P_\theta) \right\}\mathrm{d}Q(\theta) \right\|_{\mathcal{H}(k)}^2 %
    \nonumber \\
    & = &
    \iint \kappa_{P_\star}(\theta,\vartheta) \; \mathrm{d}Q(\theta) \mathrm{d}Q(\vartheta), \label{eq: quadratic form}
\end{IEEEeqnarray}
where the last step follows from expanding the norm,   using the inner product, and exchanging inner product and integral. 
Here, the resulting
$\kappa_{P_\star} : \Theta \times \Theta \rightarrow \mathbb{R}$ is  a kernel on $\Theta$, and given by 
\begin{IEEEeqnarray}{rCl}
    \kappa_{P_\star}(\theta,\vartheta)
    & = &
    \langle \mu_k(P_\star) - \mu_k(P_\theta) , \mu_k(P_\star) - \mu_k(P_\vartheta) \rangle_{\mathcal{H}(k)} \:.
    \nonumber
\end{IEEEeqnarray}
This reveals one possible interpretation of \eqref{eq: quadratic form} as a \textit{kernel Stein discrepancy} \citep{chwialkowski2016kernel,liu2016kernelized,gorham2017measuring} corresponding to the \textit{Stein kernel} $ \kappa_{P_\star}$ \citep{oates2017control}.
One of the implications is that, \textit{if}  $P_\star = P_{\theta_\star}$ for some unique $\theta_\star \in \Theta$, then \eqref{eq: quadratic form} is uniquely minimised by $Q = \delta_{\theta_\star}$ provided $k$ is a characteristic kernel.

\subsection{Estimation, Regularisation and Robustness}
\label{subsec: robust}

Of course, the true data-generating distribution $P_\star$ in \eqref{eq: quadratic form} is unknown and must be approximated. 
Following \eqref{eq:prediction-centric-uq}, we use the empirical distribution $P_n$ in lieu of $P_\star$.
For this special case, we find
\begin{align}
\kappa_{P_n}(\theta,\vartheta) & \stackrel{+C}{=} \iint k(y,y') \; \mathrm{d}P_\theta(y) \mathrm{d} P_\vartheta(y')
 \label{eq: explicit mmd} \\
& \hspace{-30pt} -\frac{1}{n} \sum_{i=1}^n \int k(y_i,y)  \mathrm{d}P_\theta(y)
-\frac{1}{n} \sum_{i=1}^n \int k(y_i,y)  \mathrm{d}P_\vartheta(y) ,
\nonumber
\end{align}
where \smash{$\stackrel{+C}{=}$} denotes a $\theta$- and $\vartheta$-independent additive constant.
Specific choices of $k$ and $P_{\theta}$ lead to tractable integrals in \eqref{eq: explicit mmd}, while computational strategies are available when numerical approximation is required; see \Cref{app: scores fo iid}.
One can contrast the bias introduced in the Monte Carlo approximation of log integrals $\log p_{Q}(y_i)$ that occurs in approximating objectives of the family in \eqref{eq:half-way-eq} with straight-forward unbiased approximation of the squared \ac{mmd} via Monte Carlo.
Further, Huber robustness of the \ac{mmd} is immediate from \eqref{eq: explicit mmd}, since outlier data $y_i$ far from the effective support of $P_\theta$ contribute negligibly to \eqref{eq: explicit mmd} for typical choice of kernel $k$ \citep{alquier2024universal}.

A plug-in approximation necessitates additional regularisation, since otherwise minimisation of $Q \mapsto \operatorname{MMD}(P_n, P_Q)$ would result in a discrete distribution where, similarly to nonparametric maximum likelihood, each atom in the support would correspond to a value of $\theta$ that explains one of the data points well. 
One na\"{i}ve possibility for the regulariser is to use another (squared) \ac{mmd} between $Q$ and a reference distribution $Q_0 \in \mathcal{P}(\Theta)$.
This choice is appealing because it is again a quadratic form, so efficient convexity-exploiting algorithms can be used.
However, the (squared) MMD regulariser imposes a weak topology that is prone to the same issue of returning a distribution with finite support, and does not impose sufficient convexity for the gradient flow algorithms described in \Cref{subsec: q_n} to converge \citep[this was proven in Theorem 6 of][]{wild2023rigorous}.
Instead, we choose to employ \ac{kl} regularisation in \eqref{eq:prediction-centric-uq}, which ensures that $Q_n$ is absolutely continuous with respect to $Q_0$ (c.f. \Cref{lem: characterise Qn} in \Cref{ap: theory}).

The main challenge with using \ac{kl} regularisation in this context is that, until very recently, there were not efficient computational algorithms for solving regularised variational problems such as \eqref{eq:prediction-centric-uq}.
Fortunately, recent work on gradient flows provides a path forward \citet{wild2023rigorous}, which we explain next.

\subsection{Approximating $Q_n$ via Gradient Flow}
\label{subsec: q_n}

From \eqref{eq:prediction-centric-uq}, and now with $\Theta = \mathbb{R}^p$, the output $Q_n$ of the proposed \ac{pcuq} method is a minimiser of the entropy-regularised objective
\begin{align}
    \mathcal{F}_n(Q) &= \mathcal{E}_n(Q) + \lambda_n \int \log q(\theta) \; \mathrm{d}Q(\theta) , \label{eq: free energy}  
\end{align}
where the \emph{free energy} $\mathcal{E}_n(Q)$ is identical, after algebraic manipulation, to
\begin{align*}
    \mathcal{E}_n(Q) & \stackrel{+C}{=} \int v(\theta) \; \mathrm{d}Q(\theta) + \frac{1}{2} \iint \kappa_{P_n}(\theta, \vartheta) \; \mathrm{d}Q(\theta)\mathrm{d}Q(\vartheta) , \nonumber
\end{align*}
where $v(\theta) = - \lambda_n\log q_0(\theta)$, %
and where $q$ and $q_0$ are respectively densities for $Q$ and $Q_0$.
Sufficient conditions for existence and uniqueness of $Q_n$, for any choice of $\lambda_n > 0$, are presented in \Cref{thm: exist} of \Cref{ap: theory}. 
Based on this perspective, we can exploit recent advances in mean-field Langevin dynamics to numerically approximate $Q_n$, as we explain next.

For the entropy-regularised functional $\mathcal{F}_n$ \eqref{eq: free energy}, we can simulate a Wasserstein gradient flow via a McKean--Vlasov process \citep{ambrosio2008gradient}
\begin{align}
    \mathrm{d}\theta_t & = - \nabla_{\mathrm{W}} \mathcal{E}_n(Q^t)(\theta_t) + \sqrt{2\lambda_n}\mathrm{d}W_t, \label{eq: wgd_mv}\\
    \nabla_{\mathrm{W}} \mathcal{E}_n(Q^t)(\theta_t) & = \nabla v(\theta_t) + \int \nabla_1 \kappa_{P_n}(\theta_t, \vartheta) \; \mathrm{d}Q^{t}(\vartheta) \nonumber
\end{align}
where $Q^{t} = \mathsf{law}(\theta_t)$, $\nabla_{\mathrm{W}}$ denotes the Wasserstein gradient, $(W_t)_{t \geq 0}$ is a Wiener process on $\mathbb{R}^p$ and, for the bivariate function $\kappa_{P_n}$, the notation $\nabla_1 \kappa_{P_n}$ denotes differentiation with respect to the first argument. 
As \eqref{eq: wgd_mv} can be seen as the mean-field limit of an interacting particle system, we can discretise $Q^{t}$ into a system of $N$ evolving particles $\theta_t^1, \theta_t^2, \hdots, \theta_t^N$, whose evolution is governed by the following system of \acp{sde}:
\begin{align}
    \mathrm{d}\theta_t^i =& -\bigg( \nabla v(\theta_t^i) + \frac{1}{N-1} \sum_{j\ne i} \nabla_1 \kappa_{P_n}(\theta_t^i, \theta_t^j) \bigg) \mathrm{d}t \notag \\
    & \qquad + \sqrt{2\lambda_n}\mathrm{d}W_t^i, \label{eq: wgd_mv_ip}
\end{align}
where $(W_t^i)_{t \geq 0}$ are $N$ independent Wiener processes on $\mathbb{R}^p$. 
An Euler--Maruyama discretisation of \eqref{eq: wgd_mv_ip} incurs per-iteration computational complexity $O(n N^2)$ and storage complexity (with caching) of $O(n + N)$.
The only remaining technical challenge in simulating \eqref{eq: wgd_mv_ip} is calculation of the gradient $\nabla_1 \kappa_{P_n}$; this mirrors the challenge of evaluating the integrals defining $\kappa_{P_n}$ itself in \eqref{eq: explicit mmd}, and the same strategies discussed in \Cref{app: scores fo iid} can be applied.

Gradient flows are preferred in this paper for their ease of implementation, but other computational methods could be considered; see \Cref{ap: mcmc}.

\subsection{Extension to Dependent Data}
\label{subsec: non IID}

Consider now the (often more practically-relevant) setting of non-independent data, where each datum $y_i$ is associated with a covariate $x_i \in \mathcal{X}$ and generated according to an (unknown) conditional distribution $P_\star( \cdot | x_i)$.
Our task now involves a conditional model $\{P_\theta(\cdot | x)\}_{\theta \in \Theta}$ for each $x \in \mathcal{X}$.
To extend our methodology to this setting, we suppose that
\begin{align*}
    \{ (x_i,y_i) \}_{i=1}^n \; \iid \;  \bar{P}_\star(\mathrm{d}x, \mathrm{d}y) := \frac{1}{n} \sum_{i=1}^n \delta_{x_i}(\mathrm{d}x) \; P_0(\mathrm{d}y | x_i) 
\end{align*}
and consider the extended model 
\begin{align}
\bar{P}_\theta(\mathrm{d}x , \mathrm{d}y) := \frac{1}{n} \sum_{i=1}^n \delta_{x_i}(\mathrm{d}x) \; P_\theta(\mathrm{d}y | x_i)  \label{eq: joint disn}
\end{align}
so that $\bar{P}_\star, \bar{P}_\theta \in \mathcal{P}(\mathcal{X} \times \mathcal{Y})$.
Our approach then proceeds as before, but with $k$ now a kernel on the extended space $\mathcal{X} \times \mathcal{Y}$.
For example, if $\mathcal{X} \subset \mathbb{R}^{d_{\mathcal{X}}}$ and $\mathcal{Y} \subset \mathbb{R}^{d_{\mathcal{Y}}}$, we may consider the Gaussian kernel
\begin{align}
\hspace{-8pt} \resizebox{0.91\hsize}{!}{ $ \displaystyle k((x,y),(x' \hspace{-2pt} , y' )) =  \exp\left( - \frac{\|x-x'\|^2}{\ell_{\mathcal{X}}^2} - \frac{\|y-y'\|^2}{\ell_{\mathcal{Y}}^2} \right) $ }   \label{eq: Gauss kernel}
\end{align}
with bandwidths $\ell_{\mathcal{X}}$ and $\ell_{\mathcal{Y}}$ to be specified.
The lifting of a conditional density to a joint density via a \emph{plug-in} empirical distribution for the covariate was studied in \citet{alquier2024universal}.
The suitability of this approach hinges on the number $n$ of data being sufficiently large, and the characteristics of the dependence on covariates being appropriately modelled by the kernel.
The latter assumption can be removed by formally taking $\ell_{\mathcal{X}} \rightarrow 0$, as recommended in \citet{alquier2024universal}.
Strategies for numerically approximating $\nabla_1 \kappa_{P_n}$ extend to this non-independent setting; details are deferred to \Cref{app: gradients for non IID}.
A reduction in computational complexity also results from taking the $\ell_{\mathcal{X}} \rightarrow 0$ limit; see \Cref{app: ell x zero limit}.
\section{EXPERIMENTS}
\label{sec: experiments}

For empirical assessment, we focus on the notable failure of standard and generalised Bayesian inference in the context of misspecified deterministic \ac{ode} models, asking whether the situation can be improved using \ac{pcuq}.
Examples of this abound; here we consider (a) the canonical Lotka--Volterra \ac{ode} model from population ecology, as a simple test-bed where the failure of Bayesian inference can be easily understood (\Cref{subsec: ecology}), and (b) a sophisticated system of \acp{ode} describing a cell signalling pathway, where prediction of cell response to a molecular treatment is required (\Cref{subsec: engineering}).
All experiments were performed on a 2023 MacBook Pro with 16 GB RAM. Code to reproduce these experiments can be downloaded from \url{https://github.com/zheyang-shen/prediction_centric_uq}.

\subsection{Illustration on a Lotka--Volterra Model}
\label{subsec: ecology}

As a simple test bed, we consider inference and prediction based on the deterministic \ac{lvm} in a context where data actually arise from a stochastically modified \ac{lvm}.

\paragraph{Lotka--Volterra Model}

As a prototypical model from population ecology, the \ac{lvm} describes the dynamical interaction between a prey ($u_1$) and predator ($u_2$) as a coupled system of differential equations 
\begin{align*}
    \frac{\mathrm{d}u_1}{\mathrm{d}x} & = \alpha u_1 - \beta u_1 u_2 , & u_1(0) = \xi_1, \\
    \frac{\mathrm{d}u_2}{\mathrm{d}x} & = \delta u_1 u_2 - \gamma u_2, & u_2(0) = \xi_2 ,
\end{align*}
for some $\alpha,\beta,\gamma,\delta, \xi_1, \xi_2 \geq 0$.
Real data arise as noisy observations of one or more species at discrete times; for our purposes we suppose that $y_{1:n}$ are noisy measurements of the populations $u = (u_1,u_2)$ at times $x_{1:n}$.
It is common to fit such models to data via the assumption of a Gaussian likelihood, which in this case would correspond to a Gaussian model with density
\begin{align}
    \hspace{-5pt} p_\theta(y_i | x_i) & = \prod_{i=1}^n \frac{1}{\sqrt{2 \pi \sigma^2}} \exp\left( - \frac{\|y_i - u_\theta(x_i)\|^2}{ 2 \sigma^2 } \right) , \label{eq: Gauss likelihood}
\end{align}
where parameters are $(\alpha,\beta,\gamma,\delta, \xi_1, \xi_2, \sigma)$, and we denote the dependence of the prey and predator populations on these parameters using $u_\theta$.
To improve visualisation, we consider inference only for $\theta_1 := \mathrm{logit}(\alpha)$ and $\theta_2 = \mathrm{logit}(\beta)$, with all other parameters fixed.
The distribution $Q_0$ was taken to be standard normal.

\paragraph{Failure of Bayesian Inference}

To explore the pitfalls of Bayesian inference when the \ac{lvm} is misspecified, we simulated predator-prey interactions using both the above \ac{ode} and a stochastic \ac{lvm} ($\epsilon_1, \epsilon_2 > 0$):
\begin{align*}
    \mathrm{d}u_1 &= (\alpha u_1 - \beta u_1 u_2) \; \mathrm{d}x + \epsilon_1 \; \mathrm{d}W_1, & u_1(0) = \xi_1 , \\
    \mathrm{d}u_2 &= (\delta u_1 u_2 - \gamma u_2) \; \mathrm{d}x + \epsilon_2 \; \mathrm{d} W_2, & u_2(0) = \xi_2 ,
\end{align*}
where stochasticity is used to represent the additional complexities of real predator-prey interactions that are not explicitly captured by the simple \ac{ode} model.
Observation noise was added to these simulations using the same Gaussian model in \eqref{eq: Gauss likelihood}, so that the observation model (at least) is correctly specified.
The values of all parameters are given in \Cref{app: extra experiments}.

\begin{figure*}[t]
\centering
    \includegraphics[width=\textwidth]{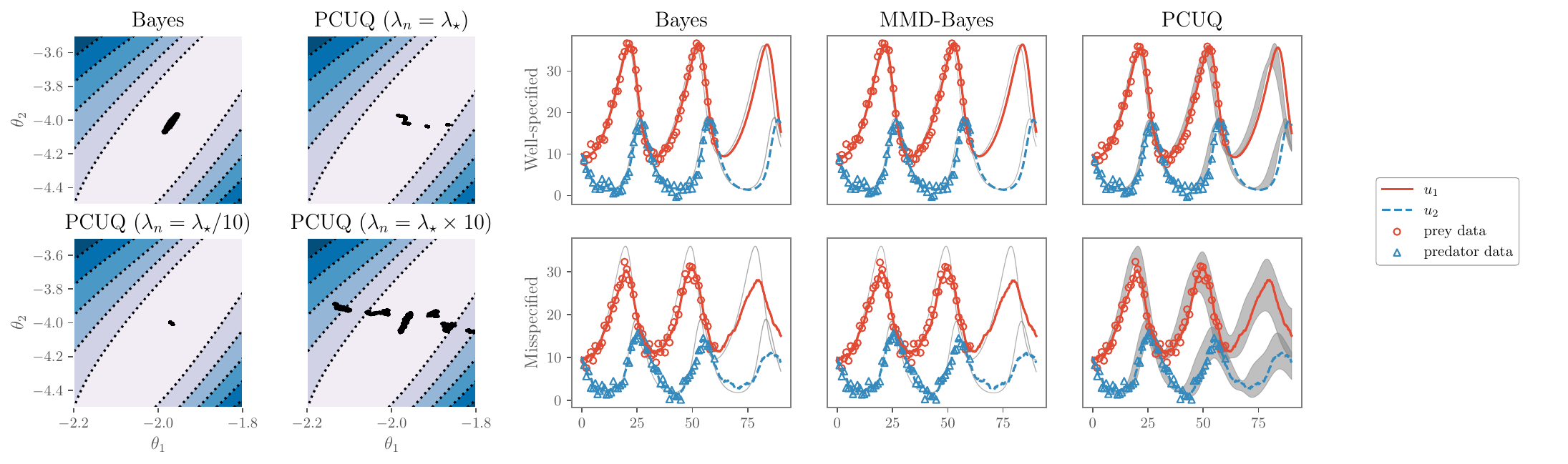}
    \caption{Lotka--Volterra model.  Left:  Contour plots depict the standard Bayesian posterior, superimposed with samples from the standard Bayesian posterior (Bayes) and the proposed method (PCUQ) with varying regularisation parameter $\lambda_n$.
    Right:  Predictive distributions in both the well-specified (top) and misspecified (bottom) context.
    (Lines depict true prey and predator populations, while the shaded regions depict the predictive quartiles for standard Bayesian inference, the MMD-Bayes method of \citet{cherief2020mmd}, and the proposed \ac{pcuq}.)
    }
    \label{fig: lv}
\end{figure*}

Bayesian inference is seen to work well when the model is well-specified, but to fail when predicting future predator and prey abundance when the model is misspecified.
This is because the Bayesian posterior concentrates around a single parameter configuration (\Cref{fig: lv}, top left panel), leading to concentration of the predictive distribution for $u(x)$ (\Cref{fig: lv}, middle column). 
Yet future predator and prey abundances do not coincide with these predicted values when the model is misspecified, rendering the Bayesian posterior grossly over-confident.
The same failure mode occurs for generalised Bayesian methods, which also concentrate around a single parameter. 
We illustrate this using the example of MMD-Bayes \citep{cherief2020mmd}, in the fourth column of \Cref{fig: lv} (for details, see \Cref{app: extra experiments}).

\paragraph{Prediction-Centric UQ}

Performing \ac{pcuq} in this context requires the extension to dependent data described in \Cref{subsec: non IID}.
The Gaussian kernel \eqref{eq: Gauss kernel} was employed with bandwidths $\ell_{\mathcal{X}} \rightarrow 0$ and $\ell_{\mathcal{Y}} = \sigma$ following the recommendation in \citet{alquier2024universal}; further discussion is contained in \Cref{app: ell x zero limit}.
The Gaussian kernel has the distinct advantage that all of the integrals appearing in \eqref{eq: explicit mmd} can be analytically evaluated, due to conjugacy with the Gaussian measurement error model \eqref{eq: Gauss likelihood}; see \Cref{app: gaussian calculations}.
The regulariser $\lambda_n$ was set to the value $\lambda_\star$ for which the spread of the \ac{pcuq} output $Q_n$ was similar to that of the standard Bayesian posterior $Q_n^\dagger$ when the model is well-specified (see \Cref{fig: lv}, left); a generally-applicable heuristic\footnote{A heuristic is necessary, as strategies for selecting $\lambda_n$ in generalised Bayesian methods remain a subject of ongoing research and debate, as discussed in detail in \citet{wu2023comparison, mclatchie2024predictive}.  Advances in this direction could potentially be deployed to \ac{pcuq}.} that requires only simulating data from the statistical model.
Since this system of \acp{ode} does not admit a closed-form solution, we solve the sensitivity equations to compute the gradients $\nabla_\theta \log p_\theta(y_i|x_i)$ as required to implement the gradient flow \eqref{eq: wgd_mv_ip}.
The predictive output from \ac{pcuq}, obtained with a particle system of size $N = 10$, is displayed in the right column of \Cref{fig: lv}.
This is qualitatively distinct from the standard Bayesian posterior, as would be expected.
Using \ac{pcuq}, the over-confidence of existing methodologies in the misspecified context is seen to be avoided.
Convergence diagnostics for the gradient flow are presented in \Cref{fig: lv3} and insensitivity of the predictions to increasing $N$ and decreasing $\lambda_\star$ is demonstrated in \Cref{fig: lv2,fig: lv4,fig: lv5}, all contained in \Cref{app: extra experiments}.

\subsection{Model Misspecification in Cell Signalling}
\label{subsec: engineering}

\begin{figure*}[t!]
    \centering

    \stackunder{ \includegraphics[width=0.3\linewidth]{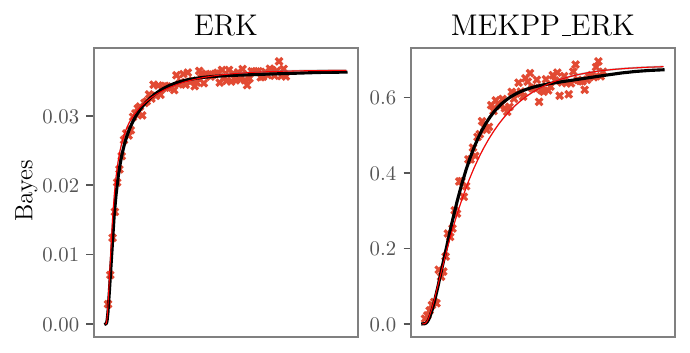} }{ \includegraphics[width=0.3\linewidth]{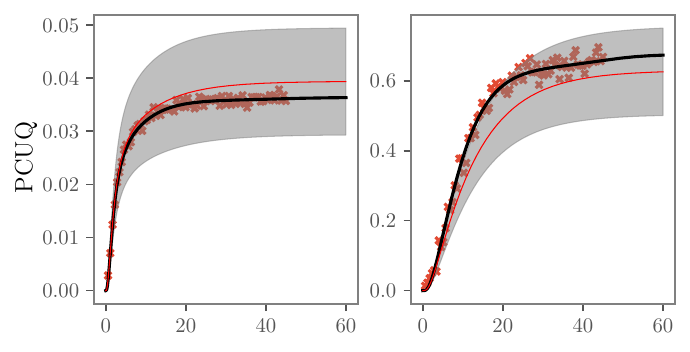} }%
    \hspace{10pt}%
    \raisebox{-0.45\height}{\includegraphics[width=0.27\linewidth,clip,trim = 21cm 5cm 17.5cm 5cm]{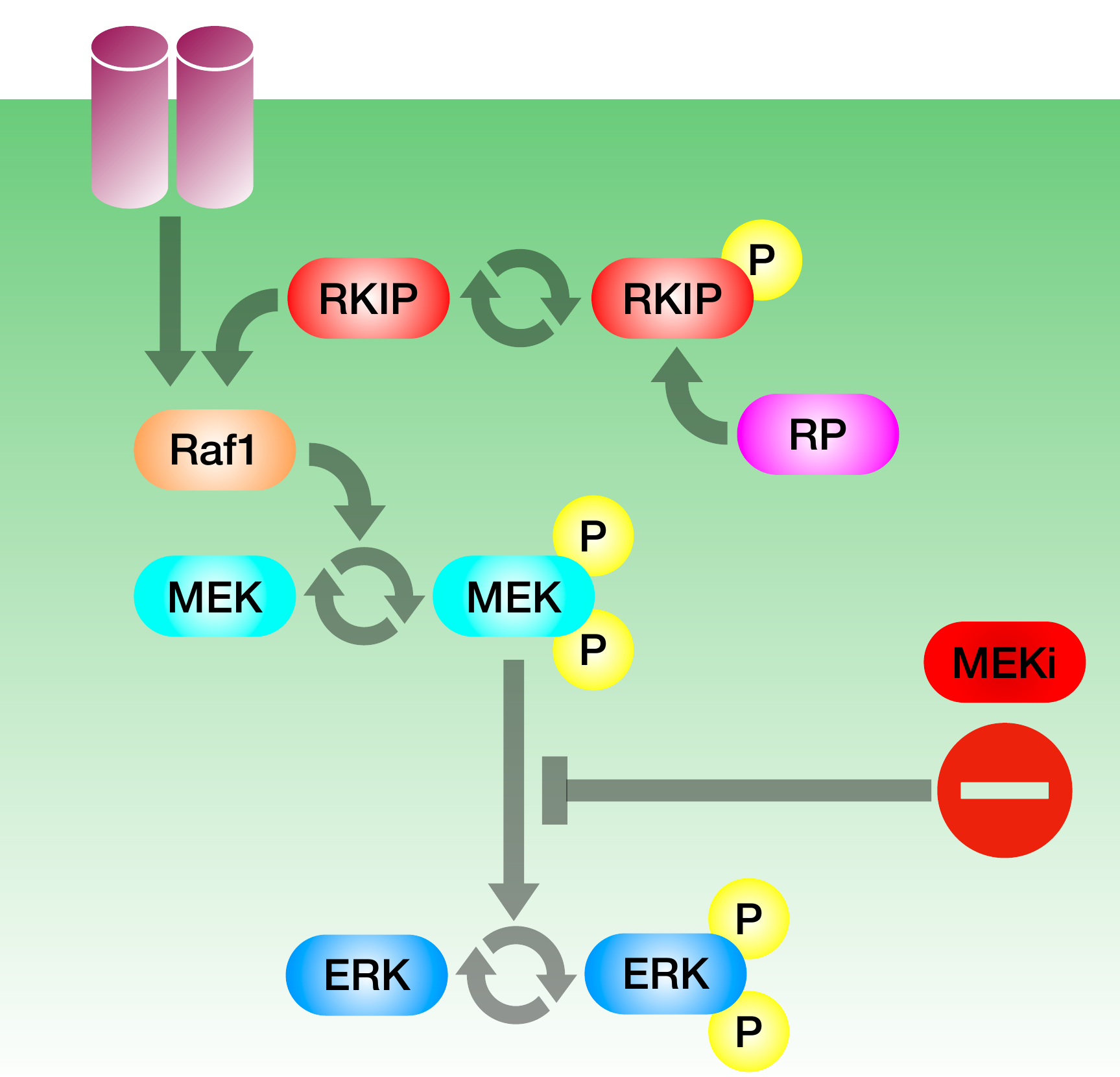}}%
    \hspace{10pt}%
    \stackunder{ \includegraphics[width=0.3\linewidth]{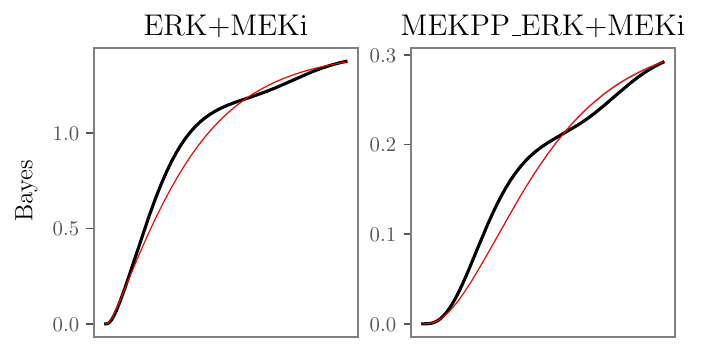} }{ \includegraphics[width=0.3\linewidth]{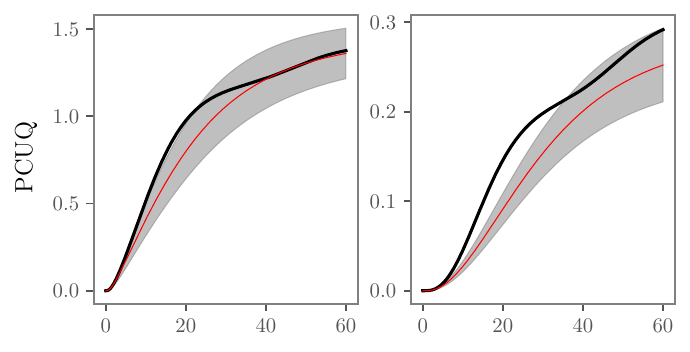} }
    
    \caption{ERK signalling model (centre); this system of 11 \acp{ode} is necessarily misspecified.
    Data (red crosses) were obtained as noisy observations of the system trajectory (black line) in an observational regime (left), and the task was to predict the causal effect of a MEK inhibitor (MEKi) on ERK signalling.
    Predictions (right) produced using standard Bayesian inference (top row) were over-confident, while \ac{pcuq} (bottom row) enabled predictive uncertainty to be accurately quantified.
    (Here the red line indicates predictive mean, while the shaded region indicates predictive quartiles.)
    }
    \label{fig: cell model}
\end{figure*}

Systems biology is ripe for \ac{pcuq}, since the community has invested decades of effort into the design of detailed \ac{ode} descriptions of cellular signalling pathways, with thousands of models hosted on repositories such as \texttt{BioModels} \citep{BioModels2020}.
The sheer complexity of cellular signalling necessarily means that such models are misspecified, yet they continue to serve as valuable tools to communicate understanding and can usefully constrain causal predictions to some extent \citep{alon2019introduction}.
Our contention is that appropriate methodology to harness the predictive performance of such misspecified \acp{ode} within a probabilistic framework does not yet exist, but can be provided by \ac{pcuq}.
To support this argument, we present a case study in which a detailed \ac{ode} description of \ac{erk} signalling is used to predict cellular response to a molecular treatment.

Adopting a similar notation and set-up as \Cref{subsec: ecology}, the evolution of concentrations of molecular species was modelled as the solution $x \mapsto u_\theta(x) \in \mathbb{R}^{d_{\mathcal{Y}}}$ to a system of $d_{\mathcal{Y}} = 11$ \acp{ode}, with parameters collectively denoted $\theta \in \mathbb{R}^p$, $p = 11$.
Data were generated by adding independent Gaussian noise (variance $\sigma^2 I_{11 \times 11}$) to the true molecular concentrations $u(x)$ at discrete times $x_{1:n}$.
The true molecular concentrations $u(x)$ were simulated from the same model, but where the propensity for molecular interaction was no longer constant but time-varying, reflecting the complexity of the cellular environment; full details are contained in \Cref{app: cell}.
For \ac{pcuq}, the kernel $k$, bandwidths, and regularisation parameter $\lambda_n$ were all selected using the same approach used in \Cref{subsec: ecology}.

As the scientific task, we seek to predict the evolution of \ac{erk} under the application of the \ac{meki}, depicted in the centre panel of \Cref{fig: cell model}.
Such interventional data were not part of the training dataset, but as an explicitly causal mechanistic model it is possible to make a causal prediction regarding cell response to treatment\footnote{Briefly, \ac{meki} prevents binding of phosphorylated MEK to ERK (the complex being denoted MEKPP\_ERK), increasing the concentration of free ERK relative to pre-treatment.}.
Predictions made using standard Bayesian inference are contrasted to the predictions made using \ac{pcuq} in \Cref{fig: cell model}.
Failure of Bayesian inference has occurred in an analogous manner to that described in \Cref{subsec: ecology}, with the potentially serious consequence that uncertainty in the efficacy of the \ac{meki} treatment is under-reported (indeed, predictive uncertainty is barely visible in the plot).
In contrast, predictive over-confidence has been avoided through the use of \ac{pcuq}.

\section{Discussion}

Deterministic models are an important scientific tool for communicating understanding and insight, but their use in a statistical context typically results in a model that is misspecified.
In a limited data context, the presence of misspecification can often be neglected.
However, model misspecification typically becomes the leading contributor to prediction error as more data are obtained.
Existing (generalised) Bayesian methods are unable to calibrate their predictive confidence in this large data limit, with arbitrarily high confidence associated to a prediction that is incorrect.

This paper proposes a possible solution, called \ac{pcuq}.
The central idea is to replace the usual notion of \emph{average data fit} with a new notion of \emph{predictive fit}.
Operationalising this idea via \ac{mmd} and gradient flows, we demonstrated that catostrophic over-confidence can be avoided in two exemplar \ac{ode} models.
Our work suggests several new directions for future work:
(i) analysing convergence of the particle-discretised gradient flow, following e.g. \cite{chizat22meanfield, hu2021mean, nitanda2022convex}; (ii) obtaining theoretical insight into the selection of $\lambda_n$; and (iii) obtaining guarantees on the coverage of $Q_n$ when the model is well-specified, in particular accounting for the plug-in approximation \eqref{eq: joint disn} in the case where data are dependent.

The issue of a misspecified deterministic model is not itself new, and methods that attempt to \emph{learn the residual} date back at least to \citet{kennedy2001bayesian}.
These methods are also prediction-centric, but they sacrifice any causal semantics associated to the statistical model by modifying the model itself, through the introduction of a nonparametric component.
A key distinguishing feature of \ac{pcuq} is that, while being prediction-centric, causal semantics associated to the model are retained.
On the negative side, this means that (possibly causal) predictions made using \ac{pcuq} are ultimately gated by the performance of the original statistical model.
This highlights the scale of the challenge associated with model misspecification; an area where substantial further work is required.

\subsubsection*{Acknowledgements}
The authors are grateful to Joshua Bon, Aidan Mullins, and Veit Wild for discussion of earlier work.
ZS and CJO were supported by EP/W019590/1.
CJO acknowledges support from The Alan Turing Institute in the UK.
JK was supported through the UK’s Engineering and Physical Sciences Research
Council (EPSRC) via EP/W005859/1 and EP/Y011805/1.

\bibliographystyle{abbrvnat}
\bibliography{bibliography}

\newpage
\onecolumn

\onecolumn
\appendix

{\noindent \Large \textbf{Supplement}}

\noindent These appendices accompany the manuscript \textit{Prediction-Centric Uncertainty Quantification}.

\section{Existence, Uniqueness, and Characterisation of $Q_n$}
\label{ap: theory}

This appendix presents sufficient conditions for the existence and uniqueness of a minimiser $Q_n$ of \eqref{eq:prediction-centric-uq}, and an implicit characterisation of $Q_n$ which in turn suggests some alternative sampling strategies that could be used (c.f. \Cref{ap: mcmc}).

First, we establish sufficient conditions for existence and uniqueness of $Q_n$, which are stated in \Cref{thm: exist}.
These leverage the convexity of \eqref{eq: free energy}, which follows from entropic regularisation; i.e. from the use of \ac{kl} as the regulariser in \eqref{eq:prediction-centric-uq}.

\begin{theorem}[Existence and Uniqueness of $Q_n$]
\label{thm: exist}
Assume that $\Theta = \mathbb{R}^p$, $p \in \mathbb{N}$, and that:
\begin{enumerate}
    \item $k : \mathcal{X} \times \mathcal{X} \rightarrow \mathbb{R}$ is bounded
    \item $Q_0$ has density $q_0 \propto \exp( - U_0)$ where $\nabla U_0$ is Lipschitz and satisfies a coercivity condition $\langle \theta, \nabla U_0 (\theta) \rangle \geq  c\cdot \| \theta \|^2 - c^\prime $ for some $c > 0,c' \in \mathbb{R}$, and all $\theta \in \Theta$.
\end{enumerate}
    Then $Q_n$ exists and is unique for all $\lambda_n > 0$.
\end{theorem}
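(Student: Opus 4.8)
The plan is to establish existence via the direct method of the calculus of variations and uniqueness via strict convexity of $\mathcal{F}_n$ along geodesics (or linear interpolations) in $\mathcal{P}(\Theta)$. The key structural fact, already noted in the excerpt, is that $\mathcal{F}_n(Q) = \mathcal{E}_n(Q) + \lambda_n \int \log q \, \mathrm{d}Q$ where $\mathcal{E}_n$ decomposes into a linear term $\int v \, \mathrm{d}Q$ with $v = -\lambda_n \log q_0$, plus a quadratic term $\tfrac12\iint \kappa_{P_n} \, \mathrm{d}Q\mathrm{d}Q$ with $\kappa_{P_n}$ a (positive semi-definite) kernel on $\Theta$. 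I would work on the space $\mathcal{P}_{\mathrm{ac}}(\Theta)$ of absolutely continuous measures, since the negative-entropy term forces $Q_n$ into this class.

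\textbf{Step 1: lower semicontinuity and the relevant topology.} I would first argue each term of $\mathcal{F}_n$ is lower semicontinuous in a suitable sense. The quadratic form $Q \mapsto \iint \kappa_{P_n}\,\mathrm{d}Q\mathrm{d}Q = \mathrm{MMD}^2(P_n, P_Q) \geq 0$ is weakly lower semicontinuous because $\kappa_{P_n}$ is a bounded continuous kernel (boundedness of $k$ gives boundedness of $\kappa_{P_n}$), so it is a continuous positive quadratic form in the weak topology on a tight family; alternatively, $Q \mapsto \mu_k(P_Q)$ is weak-to-weak continuous and the norm is weakly lsc. The negative entropy $Q \mapsto \int \log q\,\mathrm{d}Q$ is weakly lower semicontinuous and has compact sublevel sets when combined with a confining potential. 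The linear term $\int v\,\mathrm{d}Q = -\lambda_n\int \log q_0 \,\mathrm{d}Q = \lambda_n \int U_0\,\mathrm{d}Q$ is weakly lsc and, crucially, the coercivity assumption $\langle\theta, \nabla U_0(\theta)\rangle \geq c\|\theta\|^2 - c'$ together with Lipschitz $\nabla U_0$ implies a quadratic growth lower bound $U_0(\theta) \geq \tfrac{c}{2}\|\theta\|^2 - C$ (integrate the radial derivative), which provides the needed confinement/tightness.

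\textbf{Step 2: coercivity and the direct method.} Using the quadratic lower bound on $U_0$ and $\mathrm{MMD}^2 \geq 0$, I would show $\mathcal{F}_n(Q) \geq \lambda_n \big(\tfrac{c}{2}\int\|\theta\|^2\mathrm{d}Q + \int\log q\,\mathrm{d}Q\big) - C$; the bracketed quantity is (up to constants) $\mathrm{KL}(Q, \mathcal{N}(0, c^{-1}I))$ after completing the square, hence nonnegative, so $\mathcal{F}_n$ is bounded below and its sublevel sets $\{\mathcal{F}_n \leq M\}$ have uniformly bounded second moment and bounded entropy relative to a Gaussian — this gives tightness (hence weak sequential precompactness) and uniform integrability sufficient to pass to the limit. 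Taking a minimising sequence $Q_k$, extract a weakly convergent subsequence $Q_k \rightharpoonup Q_\infty$; by the lsc of each term, $\mathcal{F}_n(Q_\infty) \leq \liminf \mathcal{F}_n(Q_k) = \inf \mathcal{F}_n$, so $Q_n := Q_\infty$ is a minimiser (and $\mathcal{F}_n(Q_n) < \infty$ forces $Q_n \in \mathcal{P}_{\mathrm{ac}}$).

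\textbf{Step 3: uniqueness.} For uniqueness I would use strict convexity of $\mathcal{F}_n$ along linear interpolations $Q_t = (1-t)Q_0' + t Q_1'$ for distinct minimisers $Q_0', Q_1'$. The entropy $Q \mapsto \int\log q\,\mathrm{d}Q$ is \emph{strictly} convex along such interpolations (standard), the linear term contributes nothing to the second variation, and the quadratic term $\tfrac12\iint\kappa_{P_n}\mathrm{d}Q\mathrm{d}Q$ is convex (not necessarily strictly) because $\kappa_{P_n}$ is positive semi-definite, so it can only help. Hence $t\mapsto \mathcal{F}_n(Q_t)$ is strictly convex whenever $Q_0' \neq Q_1'$ and both have finite entropy, contradicting that both endpoints are minimisers. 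Therefore $Q_n$ is unique.

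\textbf{Main obstacle.} The delicate point is Step 2: ensuring the coercivity/tightness argument genuinely controls minimising sequences, i.e. translating the \emph{gradient} coercivity hypothesis on $U_0$ into a \emph{growth} lower bound on $U_0$ itself and then combining it with the (possibly large, but bounded below) negative-entropy term without the $\mathrm{MMD}^2$ term causing trouble — here boundedness of $k$ is exactly what prevents the quadratic term from being unbounded below, and one must be careful that it is genuinely $\geq 0$ rather than merely bounded. A secondary subtlety is choosing the topology in which lower semicontinuity of the entropy and weak continuity of $\mu_k(P_Q)$ simultaneously hold; working with weak convergence of measures plus the moment/entropy bounds from the sublevel sets resolves this, but it requires some care to state cleanly.
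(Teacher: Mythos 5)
Your proposal is correct and takes essentially the same route as the paper's proof: existence via lower semicontinuity of the objective and a compactness (direct-method) argument, and uniqueness via strict convexity of the entropic/KL term combined with plain convexity of the squared MMD, which is a positive semi-definite quadratic form composed with the linear map $Q \mapsto P_Q$. The paper's version is a high-level sketch that defers the technical steps to Proposition 2.5 of \cite{hu2021mean} and Appendix A of \cite{wild2023rigorous}, whereas you additionally spell out how the gradient coercivity of $U_0$ yields a quadratic growth bound and hence tightness of sublevel sets; the substance of the two arguments is the same.
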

\begin{proof}[Proof of \Cref{lem: characterise Qn}]
    The result is essentially that of Proposition 2.5 of \cite{hu2021mean}; here for the reader we provide the high-level argument.
    The starting point is to recall that $Q_n$ minimises \eqref{eq:prediction-centric-uq}; i.e. $Q_n$ minimises
    \begin{align*}
        \mathcal{F}_n(Q) := \frac{1}{2} \operatorname{MMD}^2( P_n , P_{Q} ) + \operatorname{KL}(Q, Q_0) 
    \end{align*}
    Since we have assumed that our kernel $k$ is bounded, it follows that the $\operatorname{MMD}$ term is also bounded. 
    It hence follows that $\mathcal{F}_n$ is finite whenever $\mathrm{KL} \left( Q, Q_0 \right)$ is finite, i.e. the domain of $\mathcal{F}_n$ is the domain of $\mathrm{KL} \left( \cdot, Q_0 \right)$. It is well-known that this functional is strictly convex on its convex domain, and bounded below by $0$. Additionally, for any Mercer kernel $k$, the functional $\operatorname{MMD}^2$ is always convex (classically, rather than geodesically), and the mapping $Q \mapsto P_{Q}$ is linear, implying that the composite mapping $Q \mapsto \operatorname{MMD}^2( P_n , P_{Q} )$ is also convex and lowe-bounded. It hence follows that $\mathcal{F}_n$ is strictly convex on its convex domain, and is lower-bounded. Existence of a minimiser holds by routine arguments involving lower semi-continuity of the objective, see e.g. Appendix A of \cite{wild2023rigorous}, and strict convexity implies uniqueness. 
\end{proof}

The second result we present is an implicit characterisation for $Q_n$ itself, in \Cref{lem: characterise Qn}.
This is obtained by setting to zero the Wasserstein gradient $\nabla_{\mathrm{W}} \mathcal{F}_n(Q)$.

\begin{theorem}[Characterisation of $Q_n$]
\label{lem: characterise Qn}
    Under the same conditions as \Cref{thm: exist}, $Q_n$ satisfies the implicit equation
    \begin{align}
        Q_n \left( \mathrm{d} \theta \right) & = \frac{1}{Z} \cdot Q_0 ( \mathrm{d} \theta ) \cdot \exp \left( - \frac{1}{\lambda_n} \cdot V_{Q_n} ( \theta ) \right) \label{eq: implicit Qn} 
    \end{align}
    where, letting $\kappa_0(\theta,\vartheta) = \langle \mu_k(P_\theta) , \mu_k(P_\vartheta) \rangle_{\mathcal{H}(k)}$,
    \begin{align*}
        V_{Q_n}(\theta) &= \int \kappa_0(\theta, \cdot)\mathrm{d}Q_n - \langle \mu_k(P_n), \mu_k(P_\theta)\rangle_{\mathcal{H}(k)} ,
    \end{align*}
    and where $Z > 0$ is a normalisation constant.
\end{theorem}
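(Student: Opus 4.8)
The plan is to derive the implicit equation \eqref{eq: implicit Qn} as the first-order optimality (stationarity) condition for the strictly convex variational problem \eqref{eq:prediction-centric-uq}, whose minimiser $Q_n$ exists and is unique by \Cref{thm: exist}. First I would rewrite the objective $\mathcal{F}_n$ using the quadratic-form expression \eqref{eq: quadratic form}, so that (up to a $Q$-independent constant)
\begin{align*}
    \mathcal{F}_n(Q) \stackrel{+C}{=} \tfrac12 \iint \kappa_0(\theta,\vartheta)\,\mathrm{d}Q(\theta)\mathrm{d}Q(\vartheta) - \int \langle \mu_k(P_n), \mu_k(P_\theta)\rangle_{\mathcal{H}(k)}\,\mathrm{d}Q(\theta) + \lambda_n \operatorname{KL}(Q,Q_0),
\end{align*}
using the bilinear expansion $\kappa_{P_n}(\theta,\vartheta) = \kappa_0(\theta,\vartheta) - \langle\mu_k(P_n),\mu_k(P_\theta)\rangle - \langle\mu_k(P_n),\mu_k(P_\vartheta)\rangle + \|\mu_k(P_n)\|^2$, where the last term is constant and the two cross terms each integrate to $\int\langle\mu_k(P_n),\mu_k(P_\theta)\rangle\,\mathrm{d}Q(\theta)$ by symmetry. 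The boundedness of $k$ ensures every term here is finite on the domain of $\operatorname{KL}(\cdot,Q_0)$.

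Next I would compute the first variation of $\mathcal{F}_n$ at the minimiser. For a perturbation $Q_\epsilon = Q_n + \epsilon(R - Q_n)$ with $R \in \mathcal{P}(\Theta)$ absolutely continuous with respect to $Q_0$, differentiating at $\epsilon = 0$ gives the flat (linear functional) derivative
\begin{align*}
    \frac{\delta \mathcal{F}_n}{\delta Q}(Q_n)(\theta) = \int \kappa_0(\theta,\vartheta)\,\mathrm{d}Q_n(\vartheta) - \langle \mu_k(P_n), \mu_k(P_\theta)\rangle_{\mathcal{H}(k)} + \lambda_n\left(\log\frac{\mathrm{d}Q_n}{\mathrm{d}Q_0}(\theta) + 1\right),
\end{align*}
where the factor $2 \cdot \tfrac12 = 1$ on the quadratic term uses its symmetry, and the derivative of $\operatorname{KL}(Q,Q_0)$ is the standard $\log(\mathrm{d}Q/\mathrm{d}Q_0) + 1$. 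Recognising the first two terms as exactly $V_{Q_n}(\theta)$, the stationarity condition — that this flat derivative is constant $Q_n$-almost everywhere, which is the Euler--Lagrange condition for a convex functional minimised over the simplex — reads $V_{Q_n}(\theta) + \lambda_n \log(\mathrm{d}Q_n/\mathrm{d}Q_0)(\theta) = \text{const}$. Exponentiating and absorbing the constant into the normaliser $Z$ yields \eqref{eq: implicit Qn}. Uniqueness of $Q_n$ (from \Cref{thm: exist}) guarantees this fixed-point characterisation is non-vacuous, and one should also note $Q_n \ll Q_0$ is automatic from \eqref{eq: implicit Qn}, consistent with the finiteness of the $\operatorname{KL}$ term.

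The main obstacle I anticipate is making the variational argument rigorous rather than merely formal: one must justify interchanging differentiation and integration in computing $\tfrac{\mathrm{d}}{\mathrm{d}\epsilon}\mathcal{F}_n(Q_\epsilon)$, handle the fact that $\operatorname{KL}(Q_\epsilon, Q_0)$ may have infinite-derivative behaviour at the boundary of its domain (so the perturbation class must be chosen carefully, e.g. bounded densities, with a density-argument to extend), and confirm that a minimiser of a convex functional over the convex set $\mathcal{P}(\Theta)$ indeed satisfies the "flat derivative constant on the support" condition. These points are exactly what is handled in Proposition 2.5 of \cite{hu2021mean} and in \cite{wild2023rigorous}, so I would cite those for the technical underpinnings and present only the formal Euler--Lagrange computation in detail, which is the substantive content. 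A minor secondary point is bookkeeping the constant $\mu_k(P_n)$-dependent terms correctly so that $V_{Q_n}$ comes out precisely as stated, with no stray factor of $2$ or sign error on the cross term.
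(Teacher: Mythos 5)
Your proposal is correct and follows essentially the same route as the paper's own proof: both compute the flat/functional derivative of the entropy-regularised free energy, identify the non-entropic part of the first variation with $V_{Q_n}(\theta)$, impose that the derivative be constant (the Lagrange-multiplier condition for the probability constraint), and exponentiate to obtain \eqref{eq: implicit Qn}. Your additional remarks on rigour (interchange of limits, boundary behaviour of the KL term) go slightly beyond the paper's formal calculation but do not change the argument.
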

\begin{proof}
    Using the expression for the Stein kernel $\kappa_{P_n}$ in \eqref{eq: explicit mmd}, and the fact that $Q$ is a probability measure, we can write the entropy-regularised free energy \eqref{eq: free energy} as
    \begin{align*}
        \mathcal{F}_n(Q) & = \frac{1}{2} \left\{ \frac{1}{n^2} \sum_{i=1}^n \sum_{j=1}^n k(y_i,y_j) - \frac{2}{n} \sum_{i=1}^n \iint k(y_i,y) \; \mathrm{d}P_\theta(y) \; \mathrm{d}Q(\theta) \right. \\
        & \left. \hspace{80pt}  + \iiiint k(y,y') \; \mathrm{d}P_\theta(y) \mathrm{d} P_\vartheta(y') \; \mathrm{d}Q(\theta) \mathrm{d}Q(\vartheta) \right\}  + \lambda_n \int \log \left( \frac{\mathrm{d}Q}{\mathrm{d}Q_0}(\theta) \right) \; \mathrm{d}Q(\theta)
    \end{align*}
    Taking the functional derivative with respect to $Q$, we obtain that
    \begin{align*}
        \delta \mathcal{F}_n(Q)(\theta) & = \underbrace{ - \frac{1}{n} \sum_{i=1}^n \int k(y_i,y) \; \mathrm{d}P_\theta(y) + \iiint k(y,y') \; \mathrm{d}P_\theta(y) \mathrm{d} P_\vartheta(y') \; \mathrm{d}Q(\vartheta) }_{= V_{Q_n(\theta)}} \; + \; \lambda_n \left[ 1 + \log \left( \frac{\mathrm{d}Q}{\mathrm{d}Q_0}(\theta) \right) \right] .
    \end{align*}
    The minimiser $Q = Q_n$ of the free energy satisfies $\delta \mathcal{F}_n(\theta) = \mathrm{constant}$ (where the $\mathrm{constant}$ term can be seen as a Lagrange multiplier reflecting the constraint that $Q_n$ be a probability measure), leading to the implicit equation
    \begin{align*}
        0 = V_{Q_n}(\theta) + \lambda_n \log \left( \frac{\mathrm{d}Q}{\mathrm{d}Q_0}(\theta) \right) + \mathrm{constant},
    \end{align*}
    and upon rearranging we obtain
    \begin{align*}
        \frac{\mathrm{d}Q}{\mathrm{d}Q_0}(\theta) &\propto \exp\left( - \frac{1}{\lambda_n} V_{Q_n}(\theta) \right) ,
    \end{align*}
    which is equivalent to the stated result.
\end{proof}

While the regularising effect of $Q_0$ is well-understood for Gibbs measures like $Q_n^{\dagger}$ \citep{bissiri2016general, knoblauch2022optimization},  it is perhaps surprising that $Q_0$ acts on $Q_n$ in essentially the same way in \ac{pcuq}, as shown in \eqref{eq: implicit Qn}.
This implicit characterisation of $Q_n$ is also interesting in that it suggests alternative sampling strategies, which are discussed in \Cref{ap: mcmc}.

\section{Computing the Stein Kernel $\kappa_{P_n}$ and its Gradient $\nabla_1 \kappa_{P_n}$}
\label{subsec: compute gradients}

This appendix is devoted to discussion of different computational strategies to computing integrals that appear in both the Stein kernel $\kappa_{P_n}$ and its gradient $\nabla_1 \kappa_{P_n}$.
The case of independent data is discussed in \Cref{app: scores fo iid}, while the case of dependent data is discussed in \Cref{app: gradients for non IID}.
In addition, we highlight that a reduction in computational complexity is possible for dependent data in the regime $\ell_{\mathcal{X}} \rightarrow 0$ in \Cref{app: ell x zero limit}.

\subsection{Independent Data}
\label{app: scores fo iid}

Simulation of \eqref{eq: wgd_mv_ip} requires the gradient $\nabla_1 \kappa_{P_n}$ of the Stein kernel.
Several techniques are available to compute the gradient (with respect to $\theta$) of the integrals appearing in expression \eqref{eq: explicit mmd} for the Stein kernel.
For simplicity we focus on the case of integrals with respect to $P_\theta$, but double integrals with respect to $P_\theta$ can be similarly handled.

\paragraph{Analytic Case}

For particular combinations of $k$ and $P_\theta$ the integrals appearing in \eqref{eq: explicit mmd} will be computable in closed form and derivatives of the kernel $\kappa_{P_n}$ may be exactly computed.
This scenario occurs, for example, when $k$ is a Gaussian kernel and $P_\theta$ is a Gaussian distribution; other scenarios where the kernel mean embedding can be exactly computed are listed in Table 1 of \citet{briol2019probabilistic}.

\paragraph{Score Gradient Case}

In the case where $P_\theta$ admits a positive and differentiable pdf $p_\theta(\cdot)$, under regularity conditions we can calculate that
\begin{align}
    \nabla_\theta \int k(y_i,y) \; \mathrm{d}P_\theta(y) & = \nabla_\theta \int k(y_i,y) p_\theta(y) \; \mathrm{d}y 
    = \int k(y_i,y) \nabla_\theta p_\theta(y) \; \mathrm{d}y  
    = \int k(y_i , y) (\nabla_\theta \log p_\theta(y)) \; \mathrm{d}P_\theta(y) \label{eq: score gradient}
\end{align}
and obtain a natural approximation using Monte Carlo.
This is analogous to how gradients are calculated in \emph{black-box variational inference} \citep{ranganath2014black}.

\paragraph{Reparametrisation Trick}

If we can express $P_\theta = \mathsf{law}(f_\theta(U))$ for some $\theta$-independent random variable $U \sim \mathcal{U}$ then we can employ the \emph{reparametrisation trick} to express
\begin{align*}
    \nabla_\theta \int k(y_i,y) \; \mathrm{d}P_\theta(y) & = \nabla_\theta \int k(y_i, f_\theta(u) ) \; \mathrm{d}\mathcal{U}(u) 
    = \int \nabla_\theta f_\theta(u) \nabla_2 k(y_i,f_\theta(u))  \; \mathrm{d}\mathcal{U}(u) 
\end{align*}
and obtain a natural approximation using Monte Carlo.
This is analogous to how gradients are calculated in \emph{automatic differentiation variational inference} \citep{kucukelbir2017automatic}.

\subsection{Dependent Data}
\label{app: gradients for non IID}

The strategies for computing the gradient $\nabla_1 \kappa_{P_n}$ that were presented in \Cref{app: scores fo iid} for the setting where data are independent can be extended to the regression setting considered in \Cref{subsec: non IID}, where data are dependent.

\paragraph{Analytic Case}

The Stein kernel is now
\begin{align}
\kappa_{P_n}(\theta,\vartheta) & = \frac{1}{n^2} \sum_{i,j=1}^n k((x_i,y_i),(x_j,y_j)) - \frac{1}{n^2} \sum_{i,j=1}^n \int k((x_i,y_i),(x_j,y)) \; \mathrm{d}P_\theta(y|x_j) \label{eq: extended kappa} \\
& \qquad - \frac{1}{n^2} \sum_{i,j=1}^n \int k((x_i,y_i),(x_j,y)) \; \mathrm{d}P_\vartheta(y|x_j) + \frac{1}{n^2} \sum_{i,j=1}^n \iint k((x_i,y),(x_j,y')) \; \mathrm{d}P_\theta(y|x_i) \mathrm{d} P_\vartheta(y'|x_j) . \nonumber
\end{align}
The case where $k : (\mathcal{X} \times \mathcal{Y}) \times (\mathcal{X} \times \mathcal{Y}) \rightarrow \mathbb{R}$ is a separable Gaussian kernel and $P_\theta(\cdot|x)$ is a Gaussian measurement error model can be calculated in closed form, and we present this calculation in \Cref{app: gaussian calculations}.
This was the computational approach used to perform all experiments in \Cref{sec: experiments}.

\paragraph{Score Gradient Case}

The score gradient approach can be immediately extended to the dependent data setting via
\begin{align}
    \nabla_\theta \int k((x_i,y_i),(x,y)) \; \mathrm{d}\bar{P}_\theta(x,y) 
    & = \nabla_\theta \left\{ \frac{1}{n} \sum_{j=1}^n \int k((x_i,y_i),(x_j,y)) \; \mathrm{d}P_\theta(y|x_j) \right\} \nonumber \\
    & = \frac{1}{n} \sum_{j=1}^n \int k((x_i,y_i),(x_j,y)) (\nabla_\theta \log p_\theta(y | x_j)) \; \mathrm{d}P_\theta(y|x_j) .   \label{eq: extend score gradient}
\end{align}

\paragraph{Reparametrisation Trick}

For the reparametrisation trick we now require a map $f_\theta$ such that $y_i | x_i$ is modelled as $f_\theta(U,x)$ where $U \sim \mathcal{U}$.
The gradient that we seek is then
\begin{align*}
     \nabla_\theta \int k((x_i,y_i),(x,y)) \; \mathrm{d}\bar{P}_\theta(x,y) & = \nabla_\theta \left\{ \frac{1}{n} \sum_{j=1}^n \int k((x_i,y_i),(x_j,f_\theta(u,x_j))) \; \mathrm{d}\mathcal{U}(u)  \right\} \\
    & = \frac{1}{n} \sum_{j=1}^n \int \nabla_\theta f_\theta(u,x_j) \nabla_{2,2} k((x_i,y_i),(x_j,f_\theta(u,x_j))) \; \mathrm{d}\mathcal{U}(u)
\end{align*}
where $\nabla_{2,2} k((a,b),(c,d))$ denotes the gradient with respect to argument $d$.

\subsection{Dependent Data; Simplification when $\ell_{\mathcal{X}} \rightarrow 0$}
\label{app: ell x zero limit}

In the dependent data setting, evaluation of the Stein kernel $\kappa_{P_n}$ in \eqref{eq: extended kappa} incurs a computational cost of $O(n^2)$ in general, as opposed to the $O(n)$ cost of computing \eqref{eq: explicit mmd} when data are independent.
Fortunately we can simultaneously recover the $O(n)$ cost while also mitigating negative effects of the lifting in \Cref{subsec: non IID} by considering the $(x,x')$ dependence of the kernel $k((x,y),(x',y'))$ in a particular objective limit.
This follows an identical strategy proposed in Section 3.4 of \citet{alquier2024universal}.

For the experiments that we report in the main text we used the Gaussian kernel where $\ell_{\mathcal{X}}$ is the bandwidth describing the similarity between two distinct covariates $x$ and $x'$.
A natural limit exists when $\ell_{\mathcal{X}} \rightarrow 0$, where  upon the kernel $k : (\mathcal{X} \times \mathcal{Y}) \times (\mathcal{X} \times \mathcal{Y}) \rightarrow \mathbb{R}$ becomes
\begin{align}
k((x,y),(x',y')) = \exp\left(- \frac{\|x-x'\|^2}{\ell_{\mathcal{X}}^2} - \frac{\|y-y'\|^2}{\ell_{\mathcal{Y}}^2} \right) \rightarrow \left\{ \begin{array}{ll} k(y,y') := \exp\left( - \frac{\|y-y'\|^2}{\ell_{\mathcal{Y}}^2} \right) & \text{if } x = x' \\ 0 & \text{if } x \neq x' \end{array} \right.
\end{align}
and upon plugging this limiting expression into \eqref{eq: extended kappa} we obtain
\begin{align*}
\kappa_{P_n}(\theta,\vartheta) & = \frac{1}{n^2} \sum_{i=1}^n k(y_i,y_i) - \frac{1}{n^2} \sum_{i=1}^n \int k(y_i,y) \; \mathrm{d}P_\theta(y|x_i) \\
& \qquad - \frac{1}{n^2} \sum_{i=1}^n \int k(y_i,y) \; \mathrm{d}P_\vartheta(y|x_i) + \frac{1}{n^2} \sum_{i=1}^n \iint k(y,y') \; \mathrm{d}P_\theta(y|x_i) \mathrm{d} P_\vartheta(y'|x_i) ,
\end{align*}
which is seen to have computational cost $O(n)$.
In addition to reducing cost, this limit has the effect that \emph{only} data $y_i$ corresponding to precisely the covariate $x_i$ are used to assess the performance of the model $P_\theta(\mathrm{d}y | x_i)$, mitigating negative effects of the lifting that we described in \Cref{subsec: non IID}.

\section{Alternative Sampling Methods for $Q_n$}
\label{ap: mcmc}

The aim of this appendix is to comment on the possibility of alternative sampling strategies for $Q_n$.
From \eqref{eq: wgd_mv_ip}, one can verify by inspection that $\{\theta_t\}_{1:N} = (\theta_t^1 , \dots , \theta_t^N)$ is precisely the Langevin diffusion with joint target distribution
\begin{align}
    \tilde{Q}_n^{\otimes N} \left( \{ \mathrm{d} \theta\}_{1:N}\right) \propto \left( \prod_{i = 1}^N Q_0 (\mathrm{d} \theta^i) \right) \cdot \exp \left( - \frac{1}{\lambda_n} \cdot \frac{1}{N-1} \sum_{1 \leq i < j \leq N} \kappa_{P_n}(\theta^i, \theta^j) \right)
    \label{eq:joint_density}
\end{align}
Following this observation, one could consider using a general \ac{mcmc} sampler to generate approximate samples from \eqref{eq:joint_density}; at stationarity, each of the $N$ components of the Markov chain would then represent an approximate sample from $Q_n$.
Of course, for large $N$ the dimension of the state space $\Theta^N$ can be expected to create difficulties for efficient \ac{mcmc} schemes which exactly preserve the target measure. Nevertheless, for approximate algorithms such as the unadjusted Langevin algorithm, implementation remains feasible, and there are even suggestions \citep{durmus2021asymptotic, chen2024convergence} that the asymptotic bias remains well-behaved in this limit; we highlight this as a direction for future work.

Comparing the above with $Q_n^{\otimes N}(\{\theta\}_{1:N}):=\prod_{i=1}^N Q_n(\theta^i)$, the joint density of $N$ independent draws from $Q_n$ \eqref{eq: implicit Qn}, we observe that $\tilde{Q}_n^{\otimes N}$ \eqref{eq:joint_density} differs in that the implicit term $\int\kappa_0(\theta, \cdot)\mathrm{d}Q_n$ is replaced by an explicit term $\frac{1}{N-1}\sum_{j\ne i} \kappa_0(\theta^i, \theta^j)$. 
This is essentially the key ingredient of mean-field Langevin dynamics; see e.g. \cite{chizat22meanfield, hu2021mean, nitanda2022convex}.

\section{Calculations for Gaussian Measurement Models}
\label{app: gaussian calculations}

This appendix contains explicit calculations for the integral terms appearing in the Stein kernel $\kappa_{P_n}$ and its gradient $\nabla_1 \kappa_{P_n}$ for the Gaussian measurement error model $P_\theta(\cdot|x) = \mathcal{N}(w_\theta(x),\sigma^2 I_{p \times p})$, which is extremely common in scientific modelling.
For analytic tractability we consider a separable Gaussian kernel
\begin{align*}
    k((x,y),(x',y')) = \exp\left( - \frac{\|x-x'\|^2}{2 \ell_{\mathcal{X}}^2} \right) \exp \left( - \frac{\|y-y'\|^2}{2 \ell_{\mathcal{Y}}^2} \right)
\end{align*}
so that we can analytically evaluate the integrals
\begin{align*}
    \int \exp \left( - \frac{\|y_i - y \|^2}{2 \ell_{\mathcal{Y}}^2} \right) \; \mathrm{d}P_\theta(y|x_j) & = \left( \frac{\ell_{\mathcal{Y}}^2}{\ell_{\mathcal{Y}}^2 + \sigma^2} \right)^{p/2} \exp\left( - \frac{ \| y_i - w_\theta(x_j) \|^2 }{2(\ell_{\mathcal{Y}}^2 + \sigma^2)} \right) \\
    \iint \exp \left( - \frac{\|y - y' \|^2}{2 \ell_{\mathcal{Y}}^2} \right) \; \mathrm{d}P_\theta(y|x_i) \mathrm{d}P_\vartheta(y'|x_j) & = \left( \frac{\ell_{\mathcal{Y}}^2}{\ell_{\mathcal{Y}}^2 + 2 \sigma^2} \right)^{p/2} \exp\left( - \frac{ \| w_\theta(x_i) - w_\vartheta(x_j) \|^2 }{2(\ell_{\mathcal{Y}}^2 + 2\sigma^2)} \right)
\end{align*}
and explicitly differentiate these with respect to $\theta$ to obtain
\begin{align*}
    \nabla_\theta \int \exp \left( - \frac{\|y_i - y \|^2}{2 \ell_{\mathcal{Y}}^2} \right) \; \mathrm{d}P_\theta(y|x_j) & = \frac{1}{\ell_{\mathcal{Y}}^2} \left( \frac{\ell_{\mathcal{Y}}^2}{\ell_{\mathcal{Y}}^2 + \sigma^2} \right)^{p/2 + 1} \\
    & \hspace{10pt} \times \exp\left( - \frac{ \| y_i - w_\theta(x_j) \|^2 }{2(\ell_{\mathcal{Y}}^2 + \sigma^2)} \right) [\nabla_\theta w_\theta(x_j)] [y_i - w_\theta(x_j)] \\
    \nabla_\theta \iint \exp \left( - \frac{\|y - y' \|^2}{2 \ell_{\mathcal{Y}}^2} \right) \; \mathrm{d}P_\theta(y|x_i) \mathrm{d}P_\vartheta(y'|x_j) & = \frac{1}{\ell_{\mathcal{Y}}^2} \left( \frac{\ell_{\mathcal{Y}}^2}{\ell_{\mathcal{Y}}^2 + 2 \sigma^2} \right)^{p/2 + 1} \\
    & \hspace{10pt} \times \exp\left( - \frac{ \| w_\theta(x_i) - w_\vartheta(x_j) \|^2 }{2(\ell_{\mathcal{Y}}^2 + 2\sigma^2)} \right) [ - \nabla_\theta w_\theta(x_i)] [w_\theta(x_i) - w_\vartheta(x_j)] .
\end{align*}

\section{Additional Empirical Details}

This appendix contains all details needed to reproduce the empirical results that we report in \Cref{sec: experiments}.
Details for the \ac{lvm} experiment are contained in \Cref{app: extra experiments}, while details for the cell signalling experiment are contained in \Cref{app: cell}.

\subsection{Lotka--Volterra Model}
\label{app: extra experiments}

\paragraph{Parameter Settings}

The data-generating parameters that we used for the \ac{lvm} were:
$\alpha = \mathrm{logit}^{-1}(-2)$, $\beta = \mathrm{logit}^{-1}(-4)$, $\gamma = 0.4$, $\delta = 0.02$.
The initial prey and predator concentrations were $\xi_1 = 10$ and $\xi_2 = 10$.
The \ac{sde} model described in \Cref{subsec: ecology} with intrinsic noise $\epsilon_1 = 0$, $\epsilon_2 = 0.4$ was discretised for numerical simulation using the reversible Heun's method with time step $\mathrm{d}x = 0.01$, and simulated from $x = 0$ to $x = 60$.
Data were extracted at times $x_{1:n}$ ranging from $0$ to $60$ in increments of $1.0$, with Gaussian measurement noise of variance $\sigma^2 = 1$ added.

\paragraph{MMD-Bayes}

The MMD-Bayes method of \citet{cherief2020mmd} is an instanced of generalised variational inference whose output distribution $Q$ is given by the Radon--Nikodym derivative
\begin{align*}
    \frac{\mathrm{d}Q}{\mathrm{d}Q_0}(\theta) = \exp\left( - \beta \mathrm{MMD}^2(P_n,P_\theta) \right)
\end{align*}
where $\beta > 0$ was termed a \emph{learning rate}.
For our implementation of MMD-Bayes we used the same kernel $k$ to construct the \ac{mmd} as was used for \ac{pcuq}, and we employed the same learning rate $\beta = \exp(np)$ as used in \citet{cherief2020mmd} where $\theta \in \mathbb{R}^p$ ($p = 2$ for the \ac{lvm}).

\paragraph{Convergence of Sampling Methods}

The standard Bayesian posterior and the MMD-Bayes output were each numerically approximated using 10 parallel instances of the \ac{mala} \citep{roberts1996exponential}, initialised from the true parameter values (i.e. to ensure a warm start).
A total of 5000 iterations were performed, with the proposal variance parameter tuned to avoid pathological behaviour, and only the remaining third of the samples were retained. The \ac{pcuq} output was obtained using \ac{mala} on the joint distribution of 10 particles, together with the warm start with the same number of iterations.
Trace plots, demonstrating convergence of the sampling algorithms, are displayed in \Cref{fig: lv3}.

\begin{figure}[t]
\centering
    \includegraphics[width=0.8\textwidth]{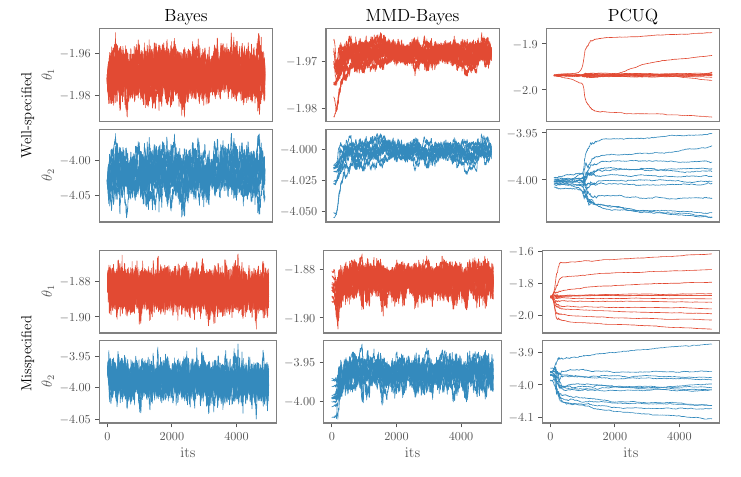}
    \caption{Lotka--Volterra model: Convergence of sampling algorithms. Here trace plots for $\theta_1$ (red) and $\theta_2$ (blue) are presented for the sampling algorithms used to approximate the standard Bayesian posterior (left), the MMD-Bayes output (middle) and the proposed \ac{pcuq} method (right).  The top row corresponds to the well-specified case where data were generated from the \ac{ode} model, while the bottom row corresponds to the misspecified case where data were generated from the \ac{sde} model.   }
    \label{fig: lv3}
\end{figure}

\paragraph{Insensitivity to $\lambda_\star$}

The proposed \ac{pcuq} method involves a regularisation parameter $\lambda_n$, and in the main text we proposed a heuristic choice $\lambda_n = \lambda_\star$. To assess the sensitivity to this choice, we re-computed predictive distributions based on alternative values of $\lambda_n$, presenting the results in \Cref{fig: lv2}.
These results demonstrate that predictive performance is rather insensitive to reducing the size of $\lambda_n$ (of course, for very large $\lambda_n$ we will recover $Q_0$).
This may be explained by the fact that predictive uncertainty is driven mainly by the data in \ac{pcuq}, with the role of the distribution $Q_0$ being limited to convexification of the gradient flow objective in \Cref{eq:prediction-centric-uq}, and for this purpose a relatively small value of $\lambda_n$ is sufficient.

\begin{figure}[t]
\centering
    \includegraphics[width=\textwidth]{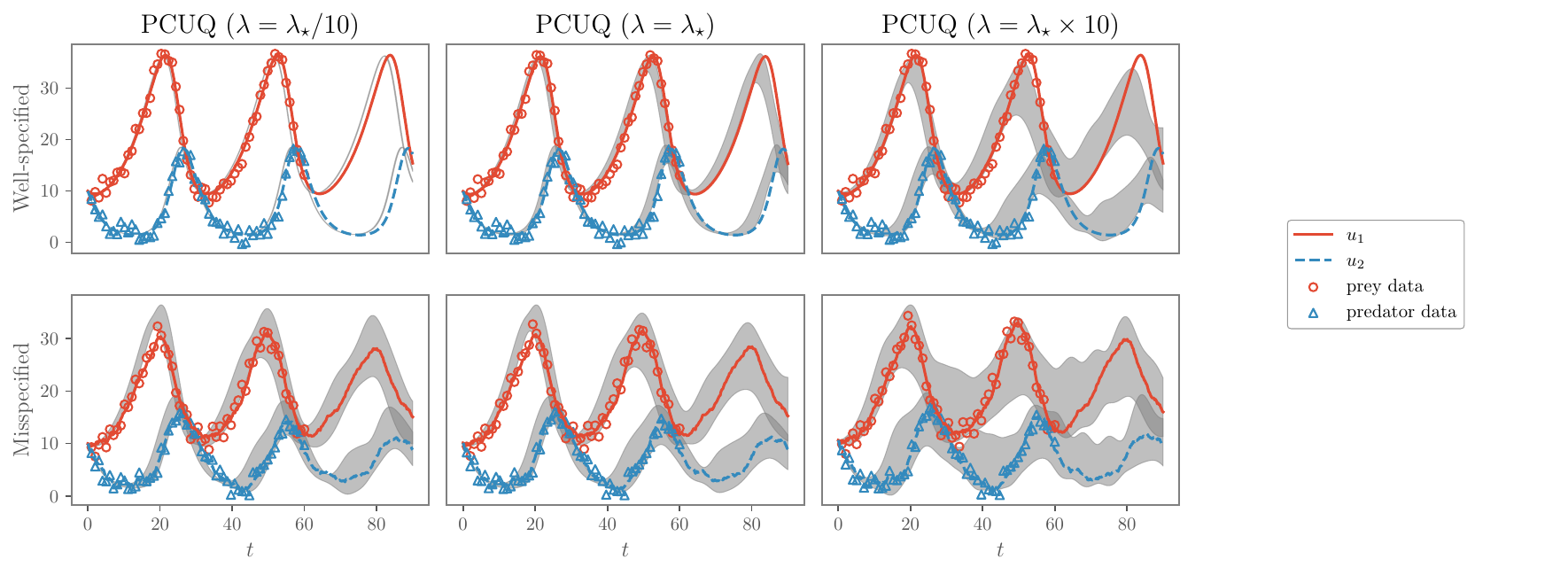}
    \caption{Lotka--Volterra Model:  Investigating sensitivity to the regularisation hyperparameter $\lambda_\star$.  The proposed \ac{pcuq} method involves a regularisation parameter $\lambda_n$, and in the main text we proposed a heuristic choice $\lambda_n = \lambda_\star$. To assess the sensitivity to this choice, we re-computed predictive distributions based on the alternative choices $\lambda_n = \lambda_\star / 10$ (left) and $\lambda_n = \lambda_\star \times 10$ (right).  It was observed that predictions were almost unchanged in both the well-specified and misspecified context.}
    \label{fig: lv2}
\end{figure}

\paragraph{Insensitivity to $N$}

The proposed gradient flow algorithm involves selecting a number of particles $N$, and in the main text we presented results based on $N = 10$. 
To assess the sensitivity to this choice, we re-computed trace plots based on alternative choices of $N$ in \Cref{fig: lv4,fig: lv5}.
A greater number of particles provides a more accurate spatial discretisation on the right hand side of \eqref{eq: wgd_mv_ip}, as well as better expressive flexibility of characterising uncertainty. 
In the well-specified case (\Cref{fig: lv4}), we see that that the number of particles affects the discretisation of $Q^t$ when the number of particles is low.
In the misspecified case (\Cref{fig: lv5}), we see that despite the increasing number of particles, the overall span of the particle distribution remains unchanged. 
Therefore, the outcome of \ac{pcuq} remains insensitive to the number of particles when the discretization and expressivity are adequately addressed. 

\begin{figure}[t]
\centering
    \includegraphics[width=0.8\textwidth]{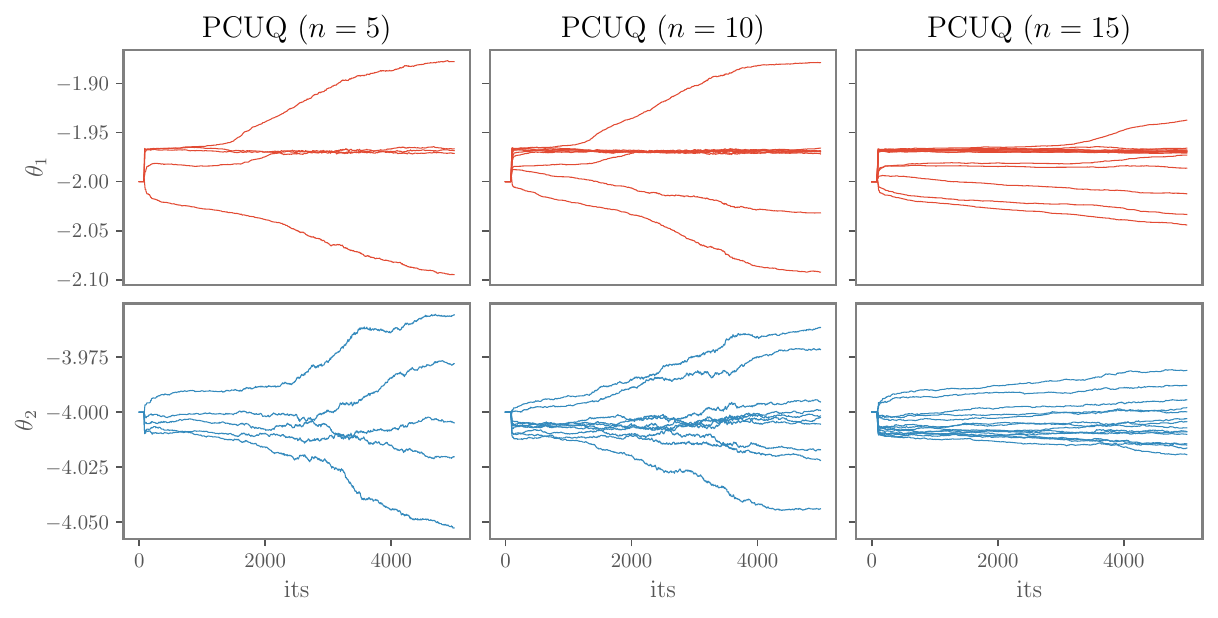}
    \caption{Lotka--Volterra Model:  Investigating sensitivity to the number of particles $N$.  
    The proposed gradient flow algorithm involves selecting a number of particles $N$, and in the main text we presented results based on $N = 10$. 
    To assess the sensitivity to this choice, we re-computed trace plots based on the alternative choices $N=5$ (left) and $N=15$ (right).
    Here data were generated from the well-specified model. }
    \label{fig: lv4}
\end{figure}

\begin{figure}[t]
\centering
    \includegraphics[width=0.8\textwidth]{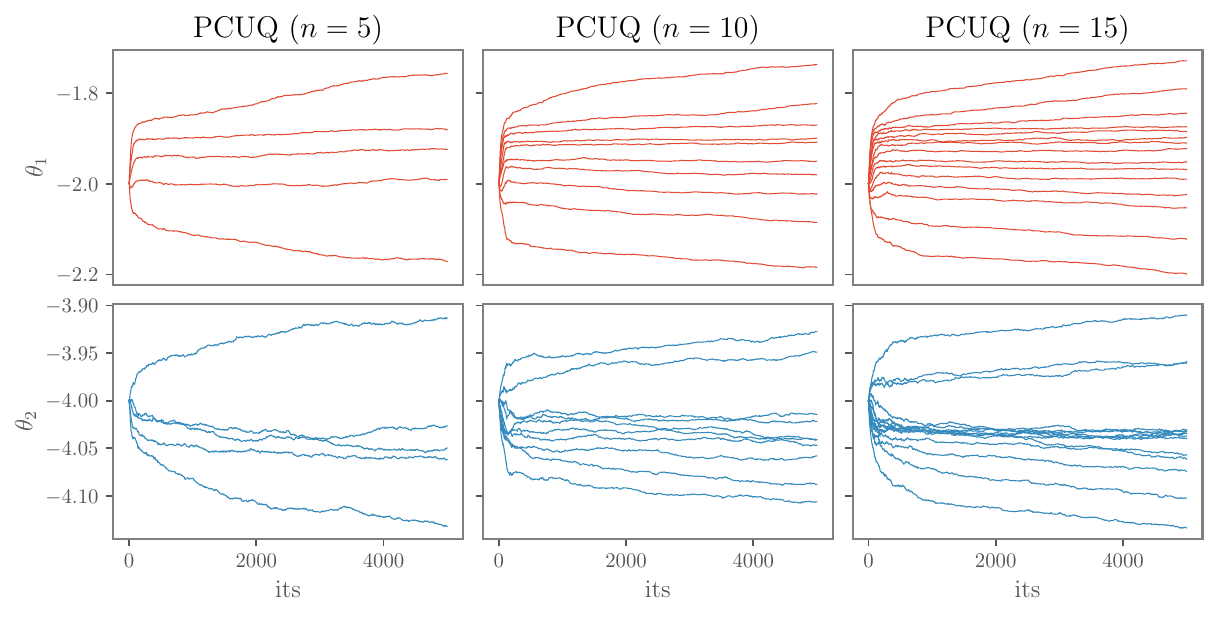}
    \caption{Lotka--Volterra Model:  Investigating sensitivity to the number of particles $N$.  
    The proposed gradient flow algorithm involves selecting a number of particles $N$, and in the main text we presented results based on $N = 10$. 
    To assess the sensitivity to this choice, we re-computed trace plots based on the alternative choices $N=5$ (left) and $N=15$ (right).
    Here data were generated from the misspecified model. }
    \label{fig: lv5}
\end{figure}

\paragraph{Dependence on Parameter Settings}

In response to a reviewer who asked about the sensitivity of these results to the specific parameter settings, we repeated the \ac{lvm} experiment with the alternative parameter settings $\alpha = \mathrm{logit}^{-1}(-1)$, $\beta = \mathrm{logit}^{-1}(-3)$, $\gamma = 0.4$, $\delta = 0.02$.
The initial prey and predator concentrations this time were $\xi_1 = 10$ and $\xi_2 = 15$.
The \ac{sde} model described in \Cref{subsec: ecology} was employed with intrinsic noise $\epsilon_1 = 0.1$, $\epsilon_2 = 0.2$.
The corresponding analogue of \Cref{fig: lv} for these alternative parameter settings is shown as \Cref{fig: lv mod}.
Likewise \Cref{fig: lv3 mod,fig: lv4 mod,fig: lv5 mod} are the analogues of \Cref{fig: lv3,fig: lv4,fig: lv5}.

\begin{figure*}[t]
\centering
    \includegraphics[width=\textwidth]{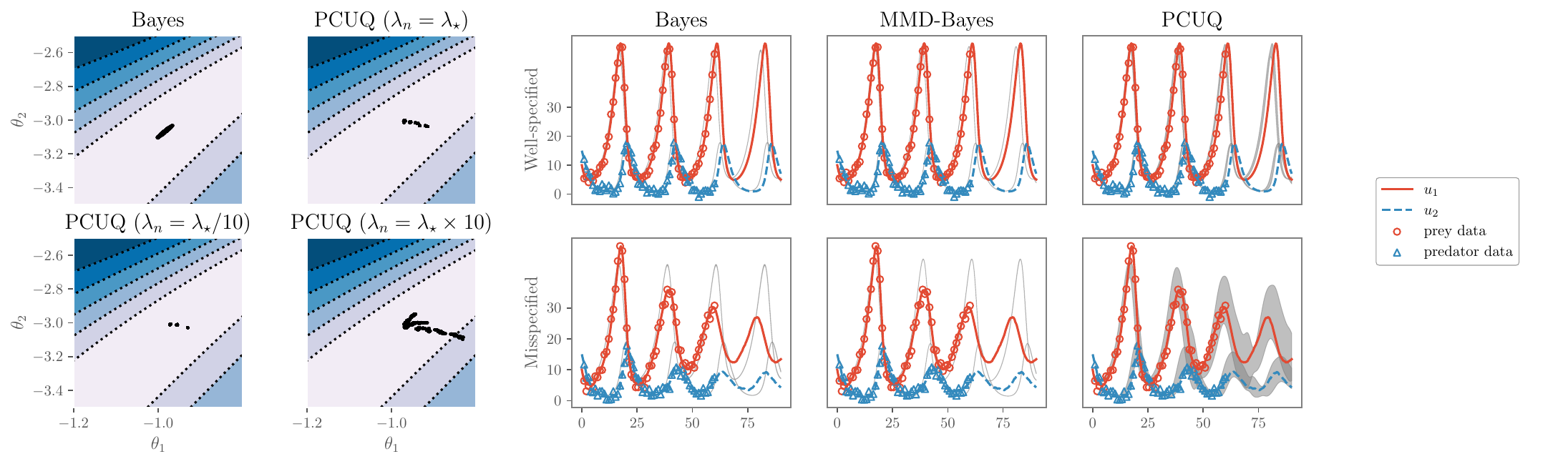}
    \caption{Lotka--Volterra model (alternative parameter settings).  Left:  Contour plots depict the standard Bayesian posterior, superimposed with samples from the standard Bayesian posterior (Bayes) and the proposed method (PCUQ) with varying regularisation parameter $\lambda_n$.
    Right:  Predictive distributions in both the well-specified (top) and misspecified (bottom) context.
    (Lines depict true prey and predator populations, while the shaded regions depict the predictive quartiles for standard Bayesian inference, the MMD-Bayes method of \citet{cherief2020mmd}, and the proposed \ac{pcuq}.)
    }
    \label{fig: lv mod}
\end{figure*}

\begin{figure}[t]
\centering
    \includegraphics[width=0.8\textwidth]{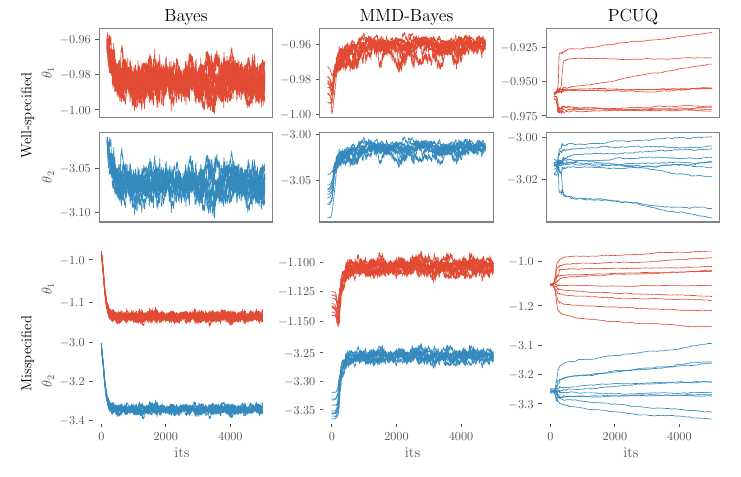}
    \caption{Lotka--Volterra model (alternative parameter settings). Convergence of sampling algorithms. Here trace plots for $\theta_1$ (red) and $\theta_2$ (blue) are presented for the sampling algorithms used to approximate the standard Bayesian posterior (left), the MMD-Bayes output (middle) and the proposed \ac{pcuq} method (right).  The top row corresponds to the well-specified case where data were generated from the \ac{ode} model, while the bottom row corresponds to the misspecified case where data were generated from the \ac{sde} model.   }
    \label{fig: lv3 mod}
\end{figure}

\begin{figure}[t]
\centering
    \includegraphics[width=0.8\textwidth]{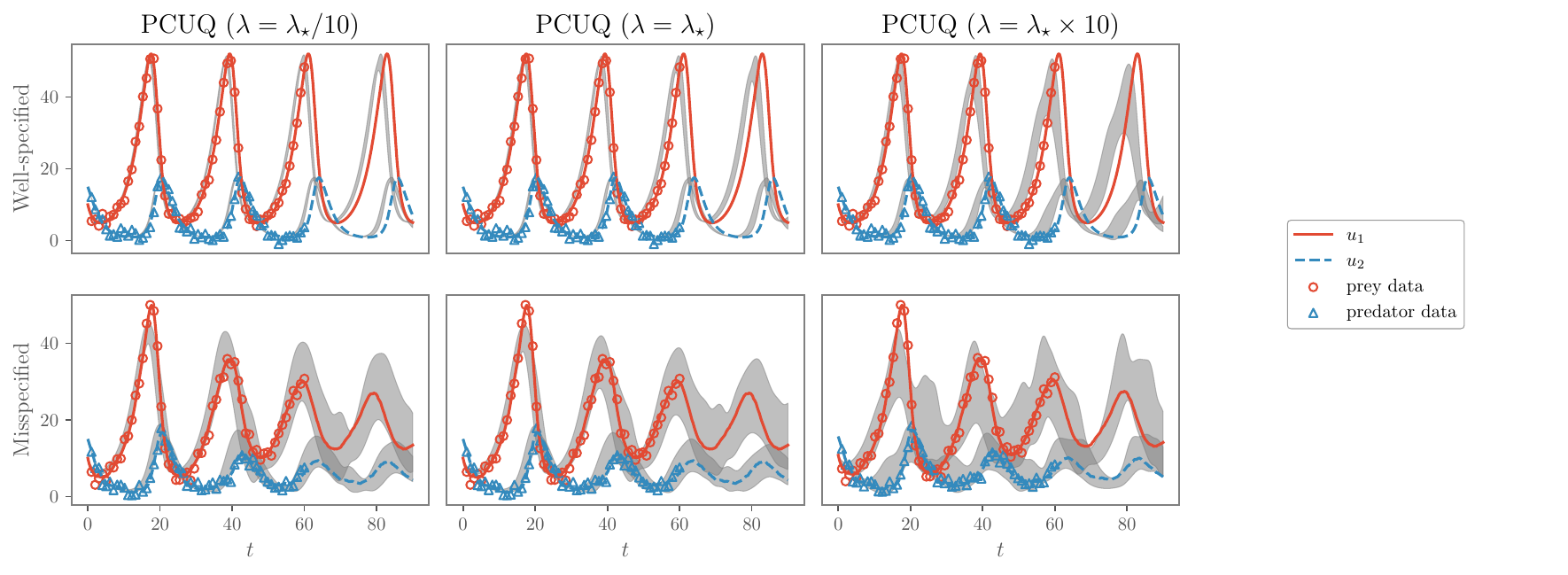}
    \caption{Lotka--Volterra Model (modified parameter settings).  
    Investigating sensitivity to the number of particles $N$.  
    The proposed gradient flow algorithm involves selecting a number of particles $N$, and in the main text we presented results based on $N = 10$. 
    To assess the sensitivity to this choice, we re-computed trace plots based on the alternative choices $N=5$ (left) and $N=15$ (right).
    Here data were generated from the well-specified model. }
    \label{fig: lv4 mod}
\end{figure}

\begin{figure}[t]
\centering
    \includegraphics[width=0.8\textwidth]{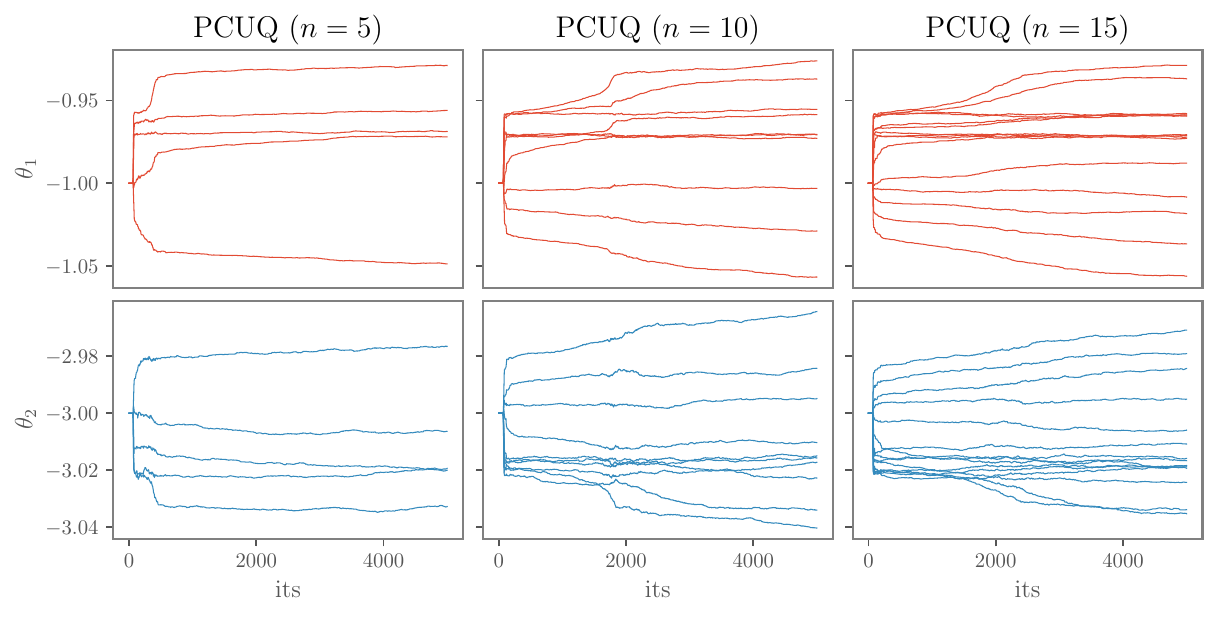}
    \caption{Lotka--Volterra Model (modified parameter settings).  
    Investigating sensitivity to the number of particles $N$.  
    The proposed gradient flow algorithm involves selecting a number of particles $N$, and in the main text we presented results based on $N = 10$. 
    To assess the sensitivity to this choice, we re-computed trace plots based on the alternative choices $N=5$ (left) and $N=15$ (right).
    Here data were generated from the misspecified model. }
    \label{fig: lv5 mod}
\end{figure}

\subsection{Cell Signalling}
\label{app: cell}

\paragraph{ERK Signalling Model}

The \ac{erk} pathway is a chain of proteins in mammalian cells that communicates a signal from a receptor on the surface of the cell to the DNA in the nucleus of the cell (see the top panel of \Cref{fig: cell model}).
In many cancers (e.g. breast, melanoma), a defect in the \ac{erk} pathway leads to uncontrolled growth, and as such this pathway has been the subject of considerable research effort.
This has included the development of detailed mathematical models of \ac{erk} signalling, and the development of compounds that can inhibit steps in the \ac{erk} pathway, providing one route to treatment \citep{hilger2002ras}.

For the purposes of this case study, we considered one such mathematical model for \ac{erk} signalling based on a system of coupled \acp{ode}, due to \citet{kwang2003mathematical}.
This model \citep{erk}, which considers just a subset of the species depicted in \Cref{fig: cell model}, was downloaded from the \texttt{BioModels} repository \citep{BioModels2020} on 4 September 2024.
It is well-known that protein signalling in mammalian cells is intrinsically stochastic due to finite molecular counts and the spatially heterogeneous, compartmentalised, dynamic cellular environment.
This is in contrast to the high copy numbers and spatially homogeneous (``well mixed'') environment that are implicitly assumed in an \ac{ode} model \citep{oates2012network}.
Yet, deterministic models are prevalent in Systems Biology, in part due to the valuable use of \acp{ode} as a communicative and logico-deductive tool.
There is therefore a real need for statistical techniques that can exploit the rich expert knowledge encoded in an \ac{ode} model whilst acknowledging that such an \ac{ode} model does not necessarily meet the criteria that we would expect from a statistical model; i.e. an \ac{ode} + noise model it is likely to be substantially misspecified.

The model of \ac{erk} signalling that we consider takes the form of a coupled system of 11 differential equations, which we write as $\mathrm{d}u / \mathrm{d}x = f(u;\beta)$, representing the evolution of concentrations of the 11 molecular species shown in \Cref{tab: proteins}.
The full system of \acp{ode} is displayed in \Cref{fig: ode erk}, and contains a total of  11 non-negative rate parameters $\beta = (\beta_1 , \dots , \beta_{11})$ that collectively determine the signalling dynamics, together with the initial conditions in the right hand column of \Cref{tab: proteins}.
Here we consider noisy observations $y_{1:n}$ of the concentrations $u$ at times $x_{1:n}$, where the noise follows an additive Gaussian measurement error model \eqref{eq: Gauss likelihood}.
This case study is intended to represent the situation where patient- or cell line-specific kinetic parameters are to be inferred from time-course proteomic data; a fundamentally important scientific task \citep[e.g. as considered by the authors][who developed this model]{kwang2003mathematical}.

\begin{table}[h!]
\centering
\small
\begin{tabular}{|c|c|c|} \hline
    Variable Name & Protein/Protein Complex & Initial Concentration \\ \hline
    $u_1$ & [Raf1] & 2 \\
    $u_2$ & [RKIP] & 2.5 \\
    $u_3$ & [Raf1\_RKIP] & 0 \\
    $u_4$ & [Raf1\_RKIP\_ERKPP] & 0 \\
    $u_5$ & [ERK] & 0 \\
    $u_6$ & [RKIPP] & 0 \\
    $u_7$ & [MEKPP] & 2.5 \\
    $u_8$ & [MEKPP\_ERK] & 0 \\
    $u_9$ & [ERKPP] & 2.5 \\
    $u_{10}$ & [RP] & 3 \\
    $u_{11}$ & [RKIPP\_RP] & 0 \\ \hline
\end{tabular}
\caption{Protein species in the \ac{erk} signalling model of \citep{erk}. }
\label{tab: proteins}
\end{table}

\begin{figure}[h!]
\small
\begin{align*}
    \frac{\mathrm{d}u_1}{\mathrm{d}x} & = f_1(u; \beta) = - \beta_1 u_1 u_2 + \beta_2 u_3 + \beta_5 u_4 \\ 
    \frac{\mathrm{d}u_2}{\mathrm{d}x} &= f_2(u; \beta) = - \beta_1 u_1 u_2 + \beta_2 u_3 + \beta_{11} u_{11} \\
    \frac{\mathrm{d}u_3}{\mathrm{d}x} &= f_3(u; \beta) = \beta_1 u_1 u_2 - \beta_2 u_3 - \beta_3 u_3 u_9 + \beta_4 u_4 \\
    \frac{\mathrm{d}u_4}{\mathrm{d}x} &= f_4(u; \beta) =\beta_3 u_3 u_9 - \beta_4 u_4 - \beta_5 u_4 \\
    \frac{\mathrm{d}u_5}{\mathrm{d}x} &= f_5(u; \beta) =\beta_5 u_4 - \beta_6 u_5 u_7 + \beta_7 u_8 \\
    \frac{\mathrm{d}u_6}{\mathrm{d}x} &= f_6(u; \beta) =\beta_5 u_4 - \beta_9 u_6 u_{10} + \beta_{10} u_{11} \\
    \frac{\mathrm{d}u_7}{\mathrm{d}x} &= f_7(u; \beta) =- \beta_6 u_5 u_7 + \beta_7 u_8 + \beta_8 u_8 \\
    \frac{\mathrm{d}u_8}{\mathrm{d}x} &= f_8(u; \beta) =\beta_6 u_5 u_7 - \beta_7 u_8 - \beta_8 u_8 \\
    \frac{\mathrm{d}u_9}{\mathrm{d}x} &= f_9(u; \beta) =- \beta_3 u_3 u_9 + \beta_4 u_4 + \beta_8 u_8 \\
    \frac{\mathrm{d}u_{10}}{\mathrm{d}x} &= f_{10}(u; \beta) =- \beta_9 u_6 u_{10} + \beta_{10} u_{11} + \beta_{11} u_{11} \\
    \frac{\mathrm{d}u_{11}}{\mathrm{d}x} &= f_{11}(u; \beta) = \beta_9 u_6 u_{10} - \beta_{10} u_{11} - \beta_{11} u_{11} 
\end{align*}
\caption{\ac{ode} model of \ac{erk} signalling due to \citet{erk}.  The variables $u_i$ represent protein concentrations, as defined in \Cref{tab: proteins}, while $\beta = (\beta_1 , \dots , \beta_{11})$ are non-negative rate parameters to be inferred. }
\label{fig: ode erk}
\end{figure}

\paragraph{Parameter Settings}

The data-generating parameters that we used for the \ac{erk} model were:
$\beta_1 = 0.53$, $\beta_2 = 0.0072$, $\beta_3 = 0.625$, $\beta_4 = 0.00245$, $\beta_5 = 0.0315$, $\beta_6 = 0.8$, $\beta_7 = 0.0075$, $\beta_8 = 0.071$, $\beta_9 = 0.92$, $\beta_{10} = 0.00122$, $\beta_{11} = 0.87$, taken from \citet{erk}.
Similarly to \Cref{subsec: ecology} we work with transformed parameters $\theta_i = \text{logit}(\beta_i)$ so that the sign of the $\theta_i$ is unconstrained.
The distribution $Q_0$ (i.e. the prior in the standard Bayesian context) was again taken to be standard multivariate normal.

\paragraph{Data Generation}

The true molecular concentrations $u(x)$ were simulated from the same model, but where the propensity for molecular interaction was no longer constant but slightly time-varying, reflecting the complexity of the cellular environment in which signalling occurs \citep{oates2012network}.
Specifically, we simulated data based on the \ac{ode} model
\begin{align*}
    \frac{\mathrm{d}u}{\mathrm{d}x} = f(u; \beta(x))
\end{align*}
with $\beta_i(x)=\left(1+1/2\sin(2\pi x/45 ) \right)\beta_i$ simulated from $x=0$ to $x=60$.
Data were extracted at times $x_{1:80}$ ranging from $0$ to $45$ in increments of $\mathrm{d}x = 0.57$, with Gaussian measurement noise of variance $\sigma^2_i = 0.01\hat{\sigma}^2_i$ added, where $\hat{\sigma}_i$ is a measure of the characteristic scale for the $i$th molecular concentration, defined as the standard deviation of $u_i(X)$ for a time $X$ sampled from $\mathrm{Uniform}(0,60)$.

\paragraph{Causal Prediction Task}

One targeted treatment for cancer sub-types characterised by dysregulated \ac{erk} signalling is a \ac{mek} inhibitor compound.
A \acf{meki} is a small molecule which is absorbed by the cancer cell and binds to the \ac{mek} protein, negatively affecting its kinase functionality.
As such, \acp{meki} have been considered as a potential cancer treatment (e.g. of melonoma) \citep{wang2007clinical}, and examples of \acp{meki} include the drugs \emph{Trametinib}, \emph{Binimetinib}, and \emph{Cobimetinib}.
The prediction task that we considered here was to predict the evolution of molecular concentrations -- and in particular \ac{erk} -- under the action of a \ac{meki} treatment. 
This is a causal prediction task, since we only have data from the regime where the \ac{meki} was not used, and we aim to leverage the explicitly causal \ac{ode} model to reason about the effect of the treatment.
Concretely, we replace $\beta_6$ with $\gamma\beta_6$ on the right hand side of \eqref{fig: ode erk} (i.e. an effective reduction in the concentration of the functionally active phosphorylated form of \ac{mek}) to obtain the causal model corresponding to the drug-treated cell.
Here we fixed $\gamma = 0.01$ %
(as opposed to setting $\gamma$ exactly to 0) to acknowledge the limited efficacy of a \ac{mek} inhibitor treatment.
A key goal is to predict the concentration of \ac{erk} (i.e. $u_5$) and related protein complexes following treatment; this is a natural functional endpoint, since activated \ac{erk} translocates to the cell nucleus and regulate gene expression via the phosphorylation of transcription factors \citep{yeung2000mechanism}.

Full results from standard Bayesian inference are shown in \Cref{fig: full Bayes erk}.
Full results from \ac{pcuq} are shown in \Cref{fig: full pcuq erk}.
All setting for \ac{pcuq} were identical to those used in \Cref{subsec: ecology}.

\begin{figure}
    \includegraphics[width = \textwidth,clip,trim = 0cm 0cm 25.5cm 0cm]{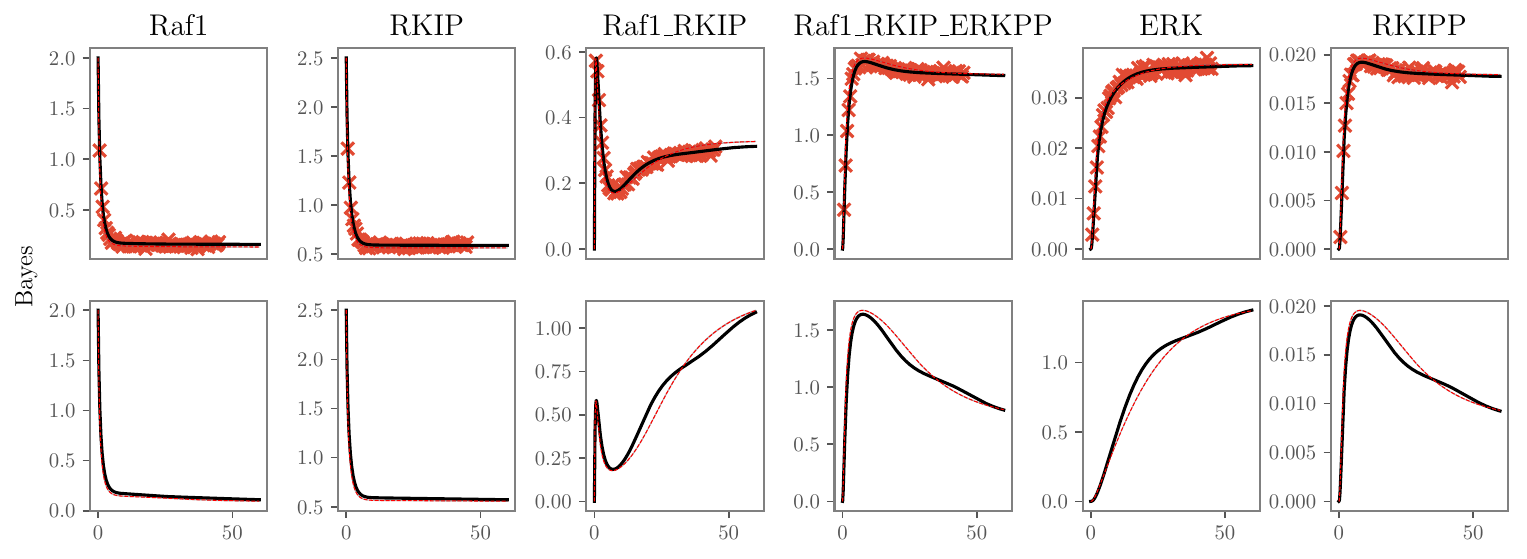}
    
    \hspace{14pt} \includegraphics[width = \textwidth,clip,trim = 26cm 0cm 0cm 0cm]{figs/Bayes_erk.pdf}
    \caption{Full results for the ERK signalling model.  Data (red crosses) were obtained as noisy observations of the system trajectory (black line) in an observational regime (top row), and the task was to predict the causal effect of a MEK inhibitor (MEKi) on ERK signalling (bottom row).
    Predictions produced using standard Bayesian inference were grossly over-confident.
    [Here the red line indicates predictive mean, while the shaded region (so small as to be difficult to see) indicates predictive quartiles.]}
    \label{fig: full Bayes erk}
\end{figure}

\begin{figure}
    \includegraphics[width = \textwidth,clip,trim = 0cm 0cm 25.5cm 0cm]{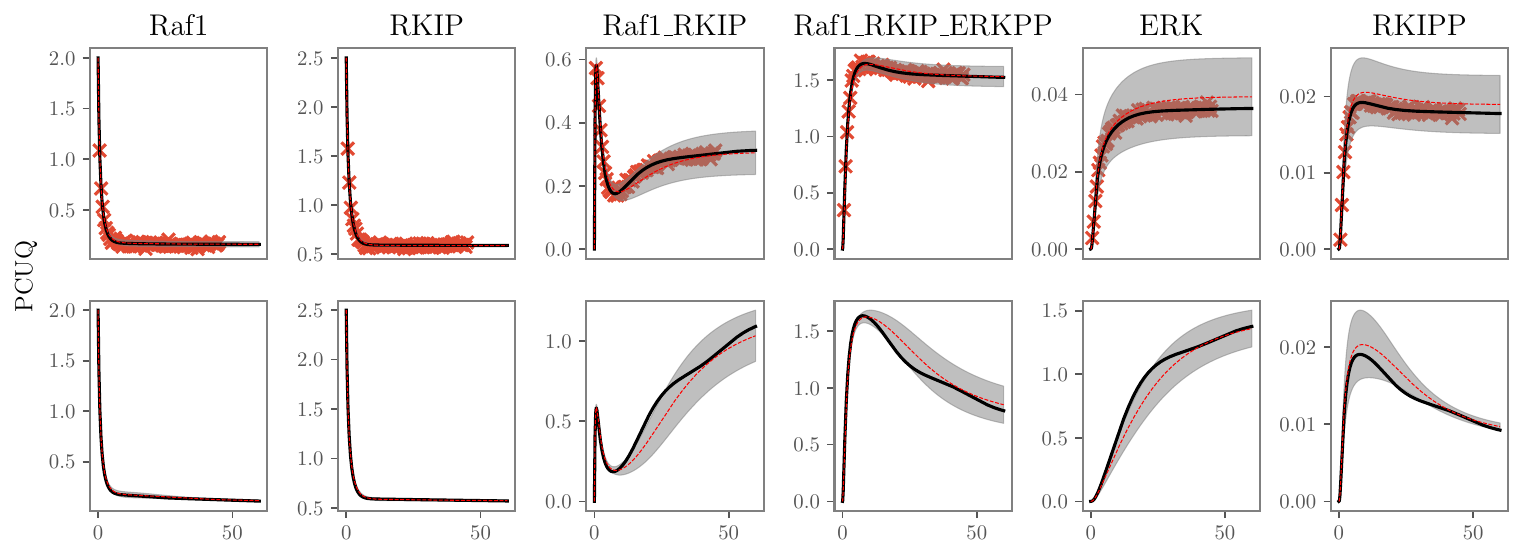}
    
    \hspace{14pt} \includegraphics[width = \textwidth,clip,trim = 26cm 0cm 0cm 0cm]{figs/pcuq_erk.pdf}
    \caption{Full results for the ERK signalling model.  Data (red crosses) were obtained as noisy observations of the system trajectory (black line) in an observational regime (top row), and the task was to predict the causal effect of a MEK inhibitor (MEKi) on ERK signalling (bottom row).
    Predictions produced using \ac{pcuq} enabled predictive uncertainty to be accurately quantified.
    [Here the red line indicates predictive mean, while the shaded region indicates predictive quartiles.]}
    \label{fig: full pcuq erk}
\end{figure}

\end{document}